\documentclass[11pt]{article}
\usepackage[left=1in, top=1in, right=1in, bottom=1in]{geometry}
\usepackage{indentfirst}

\usepackage{hyperref}       
\usepackage{url}            
\usepackage{booktabs}       
\usepackage{amsfonts}       
\usepackage{nicefrac}       
\usepackage{microtype}      

\usepackage[ruled]{algorithm2e} 
\usepackage{subcaption}

\SetAlFnt{\small}
\SetAlCapFnt{\small}
\SetAlCapNameFnt{\small}
\SetAlCapHSkip{0pt}
\IncMargin{-\parindent}


\usepackage{graphicx}
\usepackage{rotating}
\usepackage{amsmath}
\usepackage{dsfont}
\usepackage{float}
\usepackage{amsthm}
\graphicspath{ {./images/} }
\newtheorem{claim}{Claim}
\usepackage{xcolor}

\newtheorem{observation}{Observation}
\newtheorem{define}{Definition}
\newtheorem{theorem}{Theorem}
\newtheorem{lemma}[theorem]{Lemma}

\title{Advertising for Demographically Fair Outcomes}
\author{Lodewijk Gelauff\thanks{Management Science and Engineering, Stanford University, Stanford CA 94305. Email: {\tt lodewijk@stanford.edu}} \and Ashish Goel\thanks{Management Science and Engineering and (by courtesy) Computer Science Department, Stanford University, Stanford CA 94305. Email: {\tt ashishg@stanford.edu}} \and Kamesh Munagala\thanks{Computer Science Department, Duke University, Durham NC 27708-0129. Email: {\tt kamesh@cs.duke.edu}} \and Sravya Yandamuri\thanks{Computer Science Department, Duke University, Durham NC 27708-0129. Email: {\tt sravya.yandamuri@duke.edu}}}
\date{}

\begin{document}
\maketitle
\begin{abstract}
Online advertising on platforms such as Google or Facebook has become an indispensable outreach tool, including for applications where it is desirable to engage different demographics in an equitable fashion, such as hiring, housing, civic processes, and public health outreach efforts. Somewhat surprisingly, the existing  online advertising ecosystem provides very little support for advertising to (and recruiting) a demographically representative cohort. 

We study the problem of advertising for demographic representativeness from both an empirical and algorithmic perspective. In essence, we seek fairness in the {\em outcome} or conversions generated by the advertising campaigns.  We first present detailed empirical findings from real-world experiments for recruiting for civic processes, using which we show that methods using Facebook-inferred features are too inaccurate for achieving equity in outcomes, while targeting via custom audiences based on a list of registered voters segmented on known attributes has much superior accuracy.  

This motivates us to consider the algorithmic question of optimally segmenting the list of individuals with known attributes into a few custom campaigns and allocating budgets to them so that we cost-effectively achieve outcome parity with the population on the maximum possible number of demographics. Under the assumption that a platform can reasonably enforce proportionality in spend across demographics, we present efficient exact and approximation algorithms for this problem. We present simulation results on our  datasets to show the efficacy of these algorithms in achieving demographic parity.
\end{abstract}

\section{Introduction}
\label{sec:intro}
With the increased use of personalized ads, targeted advertising on platforms such as Google or Facebook has become an increasingly powerful tool to perform online outreach to a population. In this paper, we consider the broad problem of targeted advertising where we care about recruiting a {\em diverse} cohort that represents a target demographic mix for the purpose of civic engagement. At first blush, advertising for a representative cohort might seem like a simple problem that should follow from the vast literature on ad targeting and the wide adoption of online advertising platforms. However, the opaque nature of these platforms and the fact that they use complex machine learning algorithms are major obstacles for a good faith advertiser that wants to reach or engage a representative cohort  In this paper, we use a field study to demonstrate why this problem is challenging and motivate an algorithmic framework for achieving demographic fairness in the outcome of advertising campaigns. Our approach is primarily advertiser-centric, in that we study what an advertiser can do to target a demographically diverse mix. However, our work also provides insight into what an ad platform could do to make this problem feasible for the advertiser, without sacrificing the ability of the platform to use tools like relevance estimation and differential pricing of the same ad for different individuals.

\medskip \noindent {\bf Field Study Setup.} In Section~\ref{sec:experiment}, we present the setup and results from  running advertising campaigns on Facebook for the Participatory Budgeting elections in the cities of Durham and Greensboro in North Carolina. In Participatory Budgeting, there are a number of public projects to choose from (such as new sidewalks, park renovations, etc.). Each project has a monetary cost, and the city wants to select a set of projects given a total budget. Local community members vote directly for their preferred projects, and these votes are aggregated to decide which projects to fund. The election organizer often seeks votes from a diverse, representative cohort of residents.

The goal of the city in running these advertising campaigns was to make the demographic mix along the dimensions of race and age comparable to that of the city. In other words, we seek {\em equitable outcome} or {\em equitable conversions}: The proportion of the votes for various demographics should be close to that of the overall adult population in the city. (In our experiments, we compile the attributes of the demographics that voted by running surveys on the votes generated by these campaigns.) We use the experiments to study feasible approaches and develop an algorithmic framework. Subsequently, we simulate this framework on our data to show its efficacy for achieving demographic parity.

\medskip \noindent {\bf  Achieving High Targeting Accuracy.}
We show that a pre-requisite for achieving demographic parity is an accurate way of targeting various demographic sub-types in the population. If there were a set of campaigns that accurately target different demographics, then we can hope to allocate budgets appropriately to these campaigns to make the number of conversions proportional to the respective targeted demographic. If on the other hand, a campaign targeted at a minority demographic leads to disproportionately many conversions from the majority demographic, then even with unlimited money assigned to this campaign, the outcome cannot be equitable.

Indeed, in Section~\ref{sec:targeting}, we empirically demonstrate that Facebook's targeting methods based on multicultural affinity to ethnicity and race, or self-reported education levels have uniformly low accuracy and are not sufficient for achieving demographic parity. Even more surprisingly, we discovered that targeting a demographic minority can actually result in fewer individuals from that demographic being recruited per dollar spent compared to a wider campaign (Section~\ref{sec:pareto})!

We show a way forward to achieve high accuracy in demographic outcomes by using an approach discussed in~\cite{Korolova2}. We created a set of custom audiences, using the list of registered voters in the city, along with their (self-)identified race, age, and gender. We create campaigns where we explicitly target these voters, and show that this approach is superior in that it yields high specificity for targeting African Americans and voters by age groups. It also enables us to perform a more fine-grained study of the advertising process, and thereby work  towards achieving outcome fairness.

\medskip \noindent {\bf Balancing Fine-grained Demographics Cost-effectively.}
If we seek to approximate a random sample of the population, we need to simultaneously achieve demographic parity on combinations of several attributes (say race, age, gender, education).  Given that campaigns constructed from voter lists have sufficiently large targeting accuracy, in Section~\ref{sec:algorithms}, we turn to the question of achieving demographic parity for every combination of attributes (fine-grained demographics) cost-effectively via constructing suitably targeted campaigns. One could do this by constructing a separate custom campaign for each fine-grained demographic by using its voter list. We present arguments backed by our experiments for why this is undesirable from both privacy and cost standpoints, and why fewer campaigns with larger audiences are more cost-effective.

This motivates the algorithmic problem of segmenting the voter list into a {\em few} disjoint campaigns and allocate budgets to them so that we achieve demographic parity in outcome for as many fine grained demographics as possible.   In order to make this problem tractable, we assume the platform's allocation mechanisms are such that we can find per-demographic weights such that the spend of a campaign is split in {\em proportion to} these weights among the sub-demographics within the campaign. We argue that proportionality is an easily implementable contract between platform and advertiser. Indeed, the commonly used budget smoothing mechanism of {\em probabilistic pacing}~\cite{agwy:pacing14,bkmm:pacing17}, which uses the probability that an ad can compete for an auction to limit the amount of spending by a campaign,  implements proportionality in the large market regime. More details are in Section~\ref{sec:model}. 

Under the proportionality assumption, in Sections~\ref{sec:step} and~\ref{sec:other}, we develop efficient exact and approximation algorithms via dynamic programming for several natural objective functions related to achieving demographic parity in outcomes. Our technique involves an interesting transformation of the problem to eliminate the non-linearity arising due to the presence of proportions in the objective. In Section~\ref{sec:online}, we present heuristics for implementing these algorithms online. 

\medskip
It is important to note that the proportionality assumption does not require the platform to assign identical relevance weights or charge identical prices for all demographics (or all individuals in the same demographic).  In fact, the platform can set arbitrary prices for different individuals, and arbitrarily exclude individuals from seeing an ad, and still satisfy proportionality, provided that it uses probabilistic pacing and the relevance score/filtering criteria for an individual and an ad depend only on the characteristics of the individual and the ad, and not on which targeting criteria were used. 

Finally, in Section~\ref{sec:simulate}, we run the optimization algorithm on our data (after  inferring conversion rates for the demographics), and it provides an upper bound on how well we could have balanced on attribute combinations (such as race, age, and gender) if such balancing was our sole objective as opposed to the experiments presented above. We compute such an empirical upper bound, and show that it is indeed possible to achieve approximately fine-grained parity on most race-age-gender combinations with modest cost per vote, and reasonably large audience size per campaign.

\medskip \noindent{\bf Summary.} In summary, our work can be viewed as both a positive as well as a negative result. On the positive side, it demonstrates the power of advertising to curated lists of potential participants, and leads to principled optimization approaches which only require minor (if any) modifications to the ML routines used by the platform for targeting and pricing. On the negative side, it shows that when such lists are not available, it is not possible to treat the platform's behavior as a black box, and we would need to re-engineer these systems to take demographic diversity into account. 

It is important to note that while we point out several deficiencies in Facebook's ad platform {\em for our specific problem}, the advertising eco-system is complex, and it would not be appropriate to ascribe any wrong-doing to Facebook on the basis of our work. For example, it is entirely conceivable that Facebook imposes an extra surcharge for targeting small populations or racial characteristics to avoid privacy-loss and exploitation of vulnerable segments of society.

We finally note that though we have used Participatory Budgeting elections as the main application domain to explore the problem of advertising to diverse cohorts, our techniques and insights may be applicable more broadly in contexts where it is desirable to perform outreach to various demographics in an equitable fashion, for instance, in hiring, housing, and public health outreach efforts.

\medskip \noindent{\bf Roadmap.} We describe  our experimental setup in Section~\ref{sec:experiment}, and present our results for targeting in Section~\ref{sec:targeting}. Section~\ref{sec:algorithms} presents the  algorithmic formulation of the problem of constructing campaigns and budgets for fine-grained demographic balance, the rationale for the formulation, efficient algorithms, and an empirical evaluation on our dataset. Section~\ref{sec:discuss} provides a list of open problems, including a discussion of normative and ethical concerns.

\subsection{Related Work}
\noindent {\bf Online advertising.} Online advertising  -- including the problems of targeting, allocation, auctioning, and pricing --  has been widely studied  in the context of maximizing the {\em expected return} which can be measured as either impressions, clicks, or conversions (i.e. when the user takes the desired action such as purchase a product or complete a survey). The literature is too vast to cite here, but there has been significant work in auction design~\cite{agm:auction06,EOS07,myerson,AtheyE},  online allocations in advertising~\cite{Devanur,Devanur2,Mehta}, and design of advertising exchanges~\cite{Balsiero,GoelLMNP}. At a high level, in the former, the platform performs dynamic allocations of ads to publishers given long-term budgets for advertisers, while in the latter, the platform acts as an intermediary given reservation prices from publishers and bids from advertisers, both of which are dynamic.  However, the problem of dynamically selecting a representative cohort that actually converts has not been studied in either of these settings.

 \medskip \noindent {\bf Fair Advertising.} There has been very recent work that addresses the problem of demographic parity from the platform's perspective. For instance, the work of~\cite{CelisV} considers the problem of auction design taking into account demographic constraints, and the work of~\cite{NasrT} considers the problem of dynamically adjusting the bid of an advertiser via an MDP formulation to achieve parity among fixed categories (such as number of male versus female impressions). While these works consider the problem from the platform's perspective and hence focus on equity in allocation of budgets or impressions, we focus on equity in outcomes, which is the natural next step given a platform can reasonably balance budgets or impressions. This leads to interesting algorithmic as well as empirical insights that are different from previous work. 

A high level challenge with advertising for equitable outcomes is that platforms don't always expose a full range of demographic targeting filters for various legal and ethical reasons -- a tool that can be used for promoting diversity by allowing an advertiser to reach specific audiences can also be used for discriminating against protected groups. In some contexts such as housing or lending, such discrimination is explicitly prohibited by law.  Further,  the recent work of~\cite{Korolova2,ekksk:ads18} shows that Facebook's existing ad targeting mechanisms can lead to unfair outcomes because they rely on complex machine learning models and among other things, they point out the role of complex machine learning models that cause ads indistinguishable to humans to effectively target very different demographic groups. 
In contrast, our goal is less about understanding {\em why} disparate targeting or conversions occur, but rather about {\em how} their effects can be mitigated from the advertiser's perspective.

 \medskip \noindent {\bf  Fairness in Automated Decision Making.} . 
 Fairness is a nuanced topic, and our experiments view it as demographic parity (equity of outcomes). Recent work~\cite{DworkI,dg:fair18}  has discussed a range of fairness conditions for automated decision making, and 
 show that how even when an individual decision can seem fair, it may have an unfair effect when multiple algorithms and decisions collide. In particular, this work distinguishes between individual (equal characteristics = equal probabilities) and group-fairness (demographic groups get similar treatments).  Some preliminary progress has recently been made in incorporating notions of individual fairness into advertising models~\cite{kkr:pif2020}, but incorporating these notions into our advertiser-centric approach remains an interesting open direction. The trade-off between fairness and efficiency has been extensively studied in classification (see for instance~\cite{Parkes}) and our work performs a similar analysis in the context of advertising.  

\medskip \noindent {\bf Team formation.} A different line of work that we draw inspiration from is the formation of {\em diverse teams}~\cite{Liemhetcharat2014,Lappas2009,LiShan2010,Rahman2019,sandholm99}. In these problems, we are given a set of people with skills in different dimensions, and the goal is to choose one team or multiple teams, where each skill is represented. Such models have applications in the design of crowd sourcing platforms. A stylized view of our problem is as team formation where we view each set of advertising campaigns as a team that yields a demographic mix. Though creating such campaigns is superficially related to diverse team formation, the crucial difference is that our objective is not utility-maximization, but instead balancing the demographic mix that is obtained as a result of the campaigns.

 \medskip \noindent {\bf Collective Decision Making.} There has been a recent resurgence in the field of social choice, the science of making collective decisions. Some practical examples are Participatory Budgeting~\cite{PBP}, the MIT moral machine~\cite{MITMORAL}, sortition~\cite{BenadeGP19}, and delegative democracy~\cite{gkmp:liquid18}.  Our premise in this paper is that for such decision making processes to work effectively, it is important to recruit participants that are representative of the underlying population. 

We note that the problem we study is quite different from that of running opinion polls, where one can re-weight the results of the poll to correct for any demographic disparity in participation~\cite{gs:rds10}; this cannot be done in civic processes due to ethical constraints.

\section{Experimental Setup with Participatory Budgeting Elections}
\label{sec:experiment}
We now describe our  experimental setup and results for advertising for a demographic mix in the context of the Participatory Budgeting elections in Durham and Greensboro, North Carolina. We show example ads in Fig~\ref{fig:adexample}.

\begin{figure}[htbp]
\begin{center}
\parbox{0.45\textwidth}{
\begin{center}
\includegraphics[width = 0.3\textwidth]{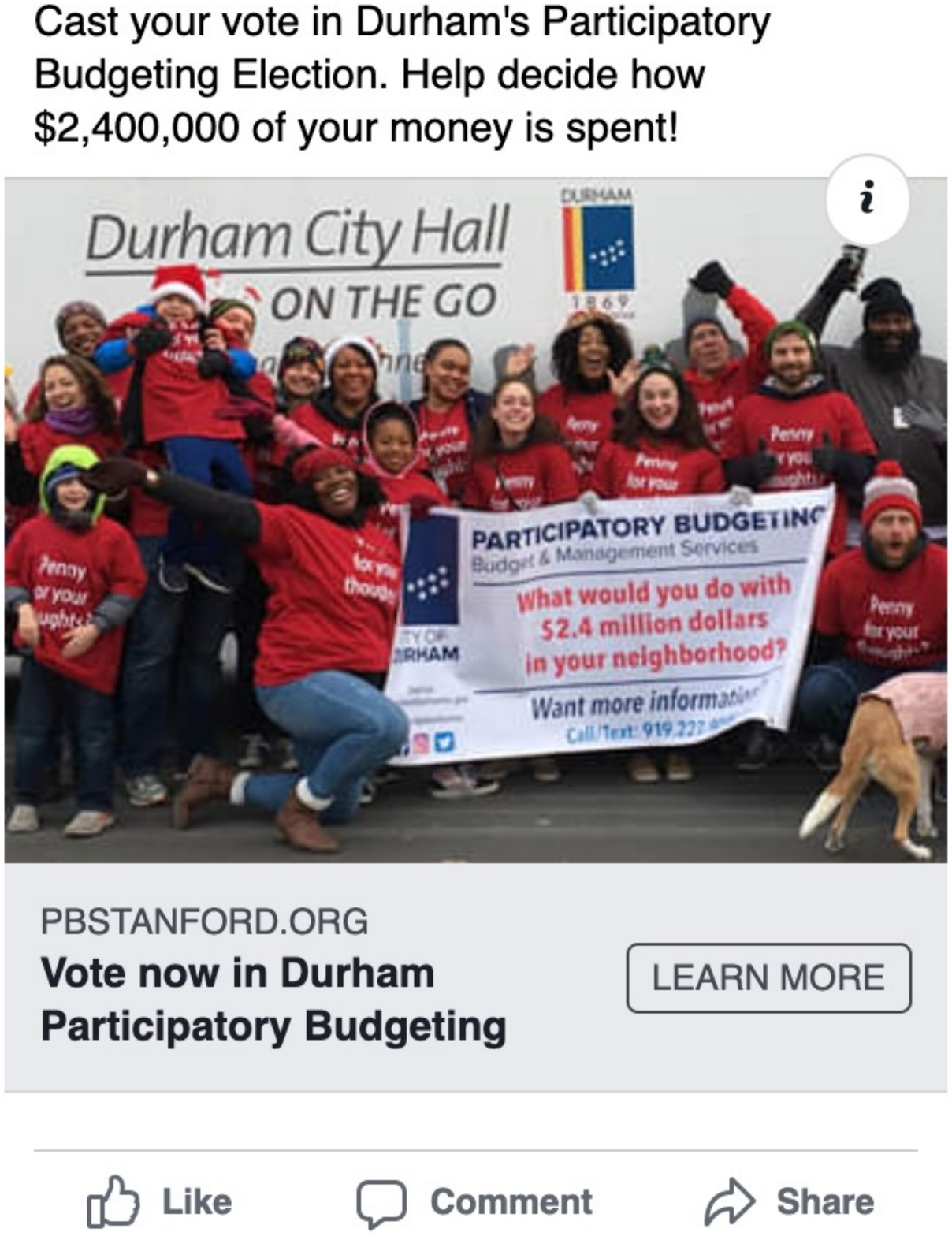}
\end{center}
}
\hspace{0.2in}
\parbox{0.45\textwidth}{
\begin{center}
\includegraphics[width = 0.3\textwidth]{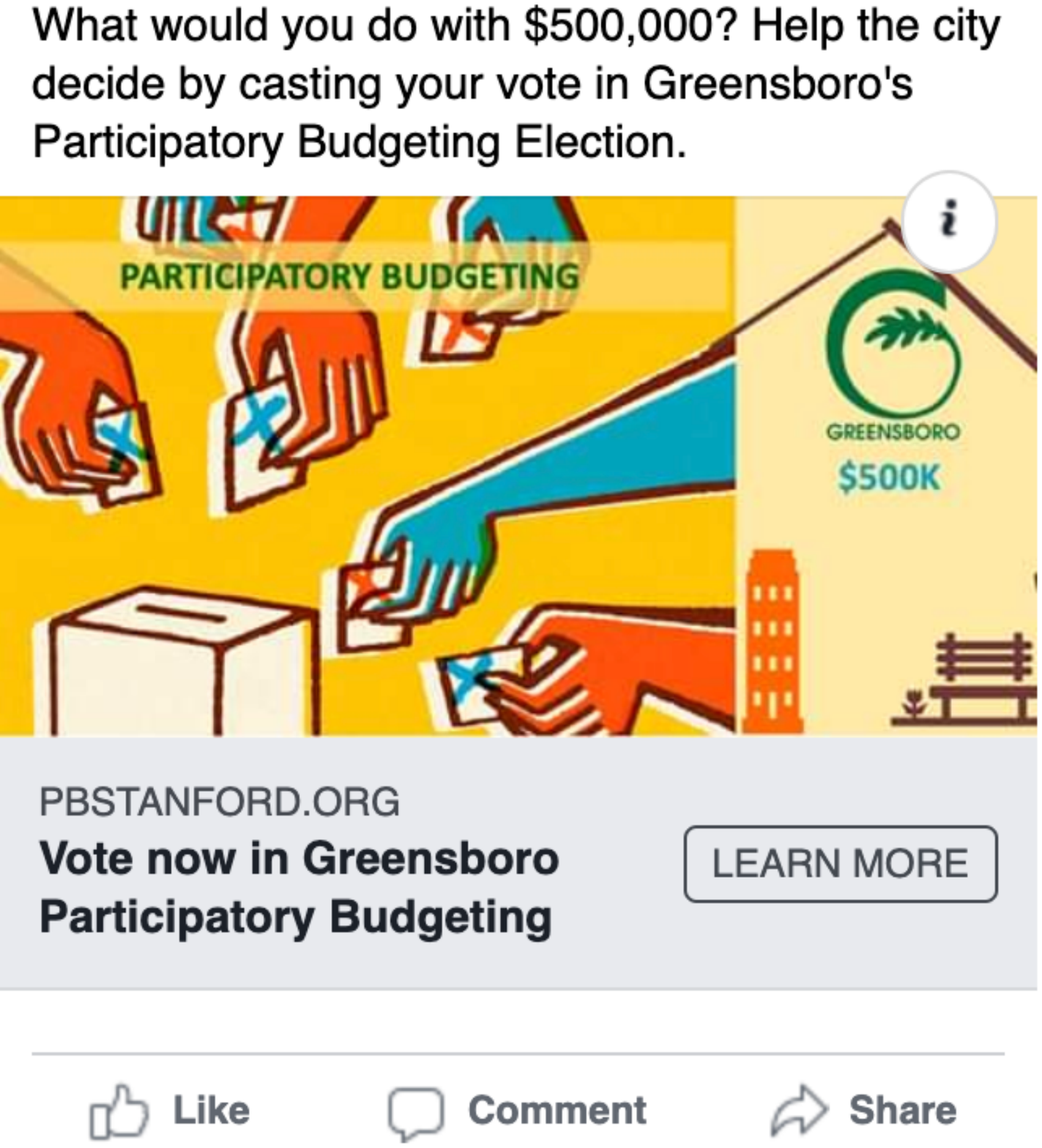}
\end{center}
}
\caption{\label{fig:adexample} Example Participatory Budgeting Facebook ads in Durham (left) and Greensboro (right).}
\end{center}
\end{figure}

Each advertising campaign on Facebook specifies an ad creative, an audience, a budget and a bidding strategy. For each city, we created a set of parallel campaigns  with an identical ad creative  that invites city residents to vote in the Participatory Budgeting election that was organized by the city. Each ad campaign is targeted to a specific demographic via a targeting criterion; we preset details of these campaigns and criteria in Appendix~\ref{app:details}. The bidding strategy is always set to optimize for number of clicks, as recommended by Facebook. If a user clicks on the ad, they are taken to the election landing page where they can choose their district, register, and vote. If a voter completes the voting process, they are then prompted to complete a satisfaction and demographic survey. A parameter in the URL tracks which campaign they arrived through, which allows us to establish the demographic characteristics of residents that arrived through each of the campaigns and completed the survey.

We use \textit{equitable outcome} (proportionate completion of surveys across demographics) as our desired objective because a proportionate ballot completion across demographics is what makes the process most representative for the city population. We see a high survey completion rate of 75\% or higher across demographics.

Our end-goal is to develop a framework for constructing a set of campaigns targeted at various demographic segments and allocate budgets between them so that the resulting conversions are proportional to the size of those demographics.  Since the goals of achieving equity of outcome and performing experiments to unearth feasible approaches to doing it  conflict with each other, we focus on the latter goal in our experiments. In Section~\ref{sec:simulate}, we simulate the algorithmic framework derived from the experiments on our data to show how equity of outcomes could have been achieved had it been our primary goal.

\section{Achieving High Targeting Accuracy}
\label{sec:targeting}
In this section, we present experiments that compare different targeting schemes in terms of their effectiveness in targeting the relevant demographics. We do so by measuring {\em specificity} of the corresponding campaign, defined as follows.

\begin{define}
Let $\alpha_i$ denote the fraction of demographic $i$ in the population. The {\em specificity} of a campaign $j$ for demographic $i$, denoted $\beta_{ij}$ is fraction of outcomes of that campaign that were from  demographic $i$. We say the specificity is ``high" if $\beta_{ij} \ge \alpha_i$ and ``low" otherwise.
\end{define}

High specificity is a pre-requisite for equity as shown by the following straightforward lemma.

\begin{lemma}
Given a set $S$ of campaigns and demographic $i$, if we have low specificity, $\beta_{ij} < \alpha_i$ for all $j \in S$, then the proportion of outcome generated for demographic $i$ by {\em any} budget allocation to $S$ is less than $\alpha_i$, so that we cannot achieve equitable outcomes.
\end{lemma}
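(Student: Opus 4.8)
The plan is to show that the overall proportion of demographic-$i$ outcomes produced by any budget allocation is a convex combination of the per-campaign specificities $\beta_{ij}$, and then invoke the elementary fact that a convex combination of numbers each strictly below a threshold stays strictly below that threshold.

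First I would fix an arbitrary budget allocation to the campaigns in $S$ and let $n_j \ge 0$ denote the number of outcomes (conversions) that campaign $j$ generates under this allocation. By the definition of specificity, the number of those outcomes coming from demographic $i$ is exactly $\beta_{ij}\, n_j$. Summing over campaigns, the total number of demographic-$i$ outcomes is $\sum_{j \in S} \beta_{ij}\, n_j$ and the total number of outcomes is $\sum_{j \in S} n_j$. Assuming the allocation produces at least one outcome (otherwise the notion of a proportion is vacuous), the proportion of the aggregate outcome attributable to demographic $i$ is
\[
  P \;=\; \frac{\sum_{j \in S} \beta_{ij}\, n_j}{\sum_{j \in S} n_j}.
\]

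Next I would observe that $P$ is precisely the weighted average of the values $\{\beta_{ij}\}_{j \in S}$ with weights $w_j = n_j / \sum_{k \in S} n_k$, which are nonnegative and sum to one; hence $P$ is a convex combination of the $\beta_{ij}$. Since by hypothesis $\beta_{ij} < \alpha_i$ for every $j \in S$, each term in this convex combination is strictly less than $\alpha_i$, and therefore $P < \alpha_i$. This holds for every admissible choice of the nonnegative weights $w_j$, that is, for every budget allocation to $S$, which is exactly the claim: no allocation can raise the demographic-$i$ proportion to $\alpha_i$, so equitable outcome is impossible.

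The argument is essentially immediate, so there is no genuinely hard step; the only point requiring minor care is the strictness of the inequality. A convex combination can equal $\alpha_i$ only if it places all its weight on terms equal to $\alpha_i$, which is ruled out here because every $\beta_{ij}$ is \emph{strictly} below $\alpha_i$. I would also flag the implicit modeling assumption that the specificities $\beta_{ij}$ are fixed properties of the campaigns and do not themselves shift as the outcome counts $n_j$ are varied; under this assumption the bound $P < \alpha_i$ is uniform over all allocations, which is what makes the negative conclusion meaningful.
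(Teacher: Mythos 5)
Your proof is correct and matches the argument the paper has in mind: the paper treats this lemma as ``straightforward'' and only sketches it via the African American example (``in any budget allocation, at most $20\%$ of the conversions will be African American''), which is precisely your convex-combination bound on the aggregate proportion. You have simply written out in full the weighted-average step the paper leaves implicit, including the correct caveat that the $\beta_{ij}$ are treated as fixed campaign properties independent of the allocation.
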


For instance, if any campaign, including that targeted at African Americans, only leads to at most $20\%$ conversions from African Americans, while they represent $40\%$ of the population, we cannot hope to achieve demographic parity since in any budget allocation, at most $20\%$ of the conversions will be African American. We therefore need specificity of a targeted campaign to be at least the proportion of that demographic in the population. However, the ``high specificity" threshold of $\alpha_i$ is necessary but need not be sufficient to construct campaigns to achieve demographic parity, since we need to achieve parity across several demographics simultaneously.  Ideally we would like specificity of a targeted campaign to be close to $1$. Surprisingly, we show that several targeting approaches fail even the baseline ``high specificity" threshold.

\subsection{Facebook-inferred Targeting: Low Specificity and Pareto Sub-Optimality} 
\label{sec:pareto}
As described in detail in Appendix~\ref{app:durham}, we started in Durham by naively targeting through Facebook-inferred characteristics related to race and ethnicity. For the African American campaign, the campaign was targeted towards those in Durham with an interest in African American culture and behavior identified as African American multi-cultural affinity (AfrAm1); similarly for Hispanic culture and affinity (Hisp1). In addition we ran one campaign excluding everyone with any college experience (Educ), and one campaign targeted to the general population (General). 

In Table~\ref{fig:specificity}, we show the specificity of the (Educ), (AfrAm1) and (Hisp1) campaigns for that respective demographic in the second, third, and fourth columns.   We observed low specificity from the Facebook targeted campaigns for race and education level. Note that the specificity for the African American campaign (AfrAm1) is $20\%$, compared to the proportion of $37\%$ in the population. The specificity is so low that the best solution to obtain a race-equitable outcome, allocating the entire budget to the (AfrAm1) campaign, had a detrimental effect on the overall vote proportions, which meant it is not possible to achieve demographic parity for African Americans via such campaigns.  In fact, the organic efforts of the City of Durham achieved a better proportion of African Americans.  Note that though the specificity of the (AfrAm1) campaign was poor, it was more specific than the (General) campaign with respect to this demographic, showing that just using the (General) campaign would not have achieved parity.  We summarize our observation as:

\begin{table}[htbp] 
\centering
 \begin{tabular}{|r | c  c  c  c  c | c  c  c  c |} 
 \hline
    & \multicolumn{5}{c|}{Durham} & \multicolumn{4}{c|} {Greensboro} \\
     Demographic & \begin{sideways} Educ \end{sideways} & \begin{sideways} Afram1 \end{sideways} & \begin{sideways} Hisp1 \end{sideways} & \begin{sideways} AfrAmLAL \end{sideways} & \begin{sideways} AfrAmC \end{sideways}  & \begin{sideways} YoungC \end{sideways} & \begin{sideways} MediumC \end{sideways} & \begin{sideways} OldC \end{sideways} & \begin{sideways} AfrAmC \end{sideways} \\
     \hline
    Population Fraction $\alpha$ & 0.13 & 0.37 &  0.14 & 0.37 & 0.37 & 0.55 & 0.30 & 0.15 & 0.41 \\
    \hline
     Achieved Specificity $\beta$ & {\color{red} \bf 0.12} & {\color{red} \bf 0.21} & {\color{red} \bf 0.12} & {\color{red} \bf 0.21} & {\color{blue} \bf 0.69}  & {\color{blue} \bf 1.0} & {\color{blue} \bf 0.93} & {\color{blue} \bf 0.91} & {\color{blue} \bf 0.70} \\
     \hline
     95\% CI Lower bound & 0.06 & 0.16 & 0.05  & 0.08 & 0.57 & 0.93 & 0.82 & 0.76 & 0.6 \\
     95\% CI Upper bound & 0.2 & 0.27 & 0.22 & 0.39 & 0.79 & 1 & 0.98 & 0.99 & 0.81 \\
     \hline
     \end{tabular}
     \caption{\label{fig:specificity} Specificity of targeted ad campaigns. $\alpha$ is the fraction of the targeted demographic in the population, $\beta$ is the mean specificity for that demographic via the campaign, and the $95\%$ CI assumes the counts follow a Binomial distribution. The high specificity voter-list based custom campaigns ($\beta \ge \alpha$) are in {\color{blue} \bf blue} and the low specificity campaigns ($\beta < \alpha$) are in {\color{red} \bf red}.}
     \end{table}

\begin{observation}
The specificity of the African American campaigns is low when we target by Facebook-inferred criteria such as multi-cultural affinity or interest in culture. The specificity of Facebook-inferred Education criteria is also low 
\end{observation}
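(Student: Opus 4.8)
The plan is to treat this as an empirical claim and establish it by instantiating the ``low specificity'' condition of the preceding Definition, namely $\beta_{ij} < \alpha_i$, directly from the measured campaign outcomes, and then arguing that the observed gap is genuine rather than an artifact of finite sampling. The campaigns at issue are the Facebook-inferred ones: the race/ethnicity affinity campaign (AfrAm1) together with its lookalike variant, and the education campaign (Educ). For each such campaign $j$ and its targeted demographic $i$, the quantity to pin down is the fraction $\beta_{ij}$ of that campaign's converting respondents who belong to $i$, which I would estimate as the sample proportion of demographic-$i$ completions among all completions tagged (via the URL parameter) to campaign $j$; the baseline $\alpha_i$ is read from the city's registered-voter and census composition. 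From Table~\ref{fig:specificity} the point estimates already satisfy $\beta_{ij} < \alpha_i$ for every Facebook-inferred campaign (the African American affinity campaign converts at roughly $0.21$ against a population share of $0.37$, and the education campaign at $0.12$ against $0.13$), which is exactly the defined low-specificity condition.

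The harder step is to show that these gaps are statistically meaningful. I would model the count of demographic-$i$ conversions within campaign $j$ as $\mathrm{Binomial}(n_j,\beta_{ij})$, form a confidence interval for each $\beta_{ij}$, and declare low specificity established at the stated level precisely when the interval's upper endpoint falls below $\alpha_i$. The main obstacle is that this clean separation holds comfortably only for the strongly affinity-targeted African American campaign, whose interval lies entirely below its $0.37$ baseline; for the education and lookalike campaigns the point estimate lies below $\alpha_i$ but the upper end of its confidence interval reaches or exceeds $\alpha_i$, so the conclusion there rests on the point estimate and would be firmed up only by tightening $n_j$, for instance by pooling the Durham and Greensboro surveys or by collecting more completions.

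A final caveat I would address is that $\beta_{ij}$ is observed only over respondents who completed the survey, so the argument implicitly assumes that survey completion is roughly demographic-independent within a campaign. I would invoke the reported completion rate of $75\%$ or higher across demographics to justify that the measured $\beta_{ij}$ is an essentially unbiased estimate of the campaign's true conversion mix, and hence that the low-specificity finding is not a response-bias artifact. Combining the uniform point-estimate comparisons with the interval argument for the cleanly separated case then yields the stated observation.
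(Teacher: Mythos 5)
Your proposal matches the paper's own support for this observation: it is a purely empirical claim established exactly as you describe, by reading the campaign specificities from the survey data in Table~\ref{fig:specificity} (Educ $\beta = 0.12$ against $\alpha = 0.13$; AfrAm1 $\beta = 0.21$ against $\alpha = 0.37$, with the AfrAm2 result of $0.09$ in Appendix~\ref{app:afram2} supporting the plural ``campaigns'') and comparing point estimates to population fractions under binomial confidence intervals. Two minor notes: the lookalike campaign you fold in is seeded from voter lists rather than a Facebook-inferred affinity/interest criterion, so it lies outside this observation's scope (the paper treats it separately in Section~\ref{sec:misc}), and your caveat that the Educ interval's upper endpoint exceeds $\alpha_i$ is a fair --- indeed more statistically careful --- reading than the paper itself makes explicit, since the paper asserts low specificity for education from the point estimate alone.
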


\paragraph{Pareto non-optimality of Targeting.} In addition to having low specificity, these targeted campaigns may not be cost-effective even for the demographic they are targeting. The first three columns of Fig~\ref{fig:durhamVotesDollar} show the cumulative African American votes per dollar for two of the Durham 2019 campaigns. We make a surprising observation:

\begin{observation}
The General campaign (General) has a higher rate of African American votes per dollar than the African American  culture/multi-cultural affinity campaigns (AfrAm1 and AfrAm2).
\end{observation}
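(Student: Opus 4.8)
Since this statement is an empirical observation rather than a mathematical claim, the plan is to make the quantity ``African American votes per dollar'' precise and then establish the comparison directly from the field-study counts underlying Fig~\ref{fig:durhamVotesDollar}. For a campaign $j$ that has accumulated spend $B$ and produced $V_j$ completed votes, the demographic survey labels a fraction $\beta_{ij}$ of those votes as coming from demographic $i$ (African American). Writing $\rho_j = V_j / B$ for the overall \emph{conversion rate per dollar} of campaign $j$, the African American votes per dollar is exactly the product $\beta_{ij}\,\rho_j$. The observation to establish is therefore
\[
\beta_{i,G}\,\rho_G \;>\; \beta_{i,A1}\,\rho_{A1}
\qquad\text{and}\qquad
\beta_{i,G}\,\rho_G \;>\; \beta_{i,A2}\,\rho_{A2},
\]
where $G$, $A1$, $A2$ index the General, AfrAm1, and AfrAm2 campaigns and $i$ is the African American demographic.

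The key step is to locate \emph{where} the content of this inequality lies and \emph{why} it is surprising. From Table~\ref{fig:specificity} and the accompanying text, the targeted campaign is strictly \emph{more} specific for African Americans than the General campaign, i.e.\ $\beta_{i,A1} > \beta_{i,G}$. Hence the inequality cannot be driven by specificity; it must come entirely from the overall conversion rates, so the crux of the argument is to demonstrate that $\rho_G$ exceeds $\rho_{A1}$ (and $\rho_{A2}$) by a margin large enough that the product still favors General despite its specificity disadvantage. I would establish this by extracting, for each campaign, the cumulative spend-versus-vote curve together with the survey-based racial breakdown, forming the two factors $\beta_{ij}$ and $\rho_j$, and verifying that the General campaign's far higher volume of conversions per dollar overwhelms its lower specificity. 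A natural mechanistic explanation to record is that narrowly targeted audiences command higher per-click prices and deliver fewer conversions per dollar on Facebook, which is precisely what depresses $\rho_{A1}$ and $\rho_{A2}$.

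The main obstacle is statistical rather than conceptual. Each African-American-votes-per-dollar figure is a product of two noisy empirical estimates --- the specificity $\beta_{ij}$ (a proportion from a modest survey sample) and the rate $\rho_j$ --- so the uncertainty compounds, and the African American vote counts for the targeted campaigns are small. To make the comparison credible I would (i) report the cumulative curves over the full campaign horizon so that the ordering is stable rather than an artifact of a single time slice, and (ii) attach confidence intervals to the products, for instance by treating the demographic counts as Binomial (as is already done for specificity in Table~\ref{fig:specificity}) or by bootstrapping the per-campaign vote records, and then check that the General campaign's advantage persists across the interval. A secondary caveat worth flagging is that the campaigns ran concurrently over overlapping audiences, so Facebook's auction and pacing dynamics may not cleanly isolate the per-campaign rates; presenting the cumulative curves is the safest way to surface a robust qualitative ordering in spite of this confound.
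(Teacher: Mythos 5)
Your proposal is correct in substance and follows the same basic strategy as the paper --- the observation is an empirical one, verified from the Durham field data --- but your statistical route differs from what the paper actually does, and the difference is worth noting. The paper never factors the quantity into $\beta_{ij}\,\rho_j$ and never compounds two intervals: it simply counts the survey-identified African American votes of each campaign, divides by cumulative spend, and plots the cumulative curves with $95\%$ confidence intervals under a Poisson assumption for the vote counts (Fig~\ref{fig:durhamVotesDollar}); the (AfrAm2) case is handled separately in Appendix~\ref{app:afram2}, where the measured rate is $0.002$ African American votes per dollar with CI $[0,0.013]$, Pareto-dominated by (General). Your product decomposition is arithmetically equivalent and buys a genuine diagnostic --- it localizes the effect in the overall conversion rate $\rho_j$, since Table~\ref{fig:specificity} and the text confirm $\beta_{i,\mathrm{AfrAm1}} > \beta_{i,\mathrm{General}}$ --- but your plan to attach a Binomial interval to $\beta_{ij}$ and a separate interval to $\rho_j$ and compound them is statistically delicate: the two factors are estimated from the \emph{same} vote records, so their errors are dependent, and naive compounding overstates or misstates the uncertainty. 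The paper's direct Poisson treatment of the raw African American vote count sidesteps this entirely (your bootstrap alternative would too, and is the better of your two options). One small overreach: your assertion that the inequality ``cannot be driven by specificity'' is right for (AfrAm1) but not for (AfrAm2), whose measured specificity of $0.09$ (Appendix~\ref{app:afram2}) may well sit below the General campaign's, so there both factors can favor (General); this only makes the observation easier to establish, but the ``surprise'' framing should be confined to (AfrAm1). Your cumulative-curve robustness check and the concurrency caveat are sensible and consistent with what the paper presents.
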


\subsection{Targeting via Voter List-based Custom Audiences: High Specificity}
\label{sec:greensboro}
The low specificity and high cost of the Facebook-inferred targeting criteria led us to targeting with custom audiences created from the North Carolina list of registered voters. In order to specify a custom audience, we take the publicly available voter registration database of North Carolina, filter for the desired characteristics and upload a list of people with personally identifiable information. Facebook matches a set of profiles to the list, which then becomes the audience for that campaign. 

\begin{figure}[htbp]
\begin{center}
\parbox{.46\linewidth}{
\includegraphics[width = 0.45\textwidth]{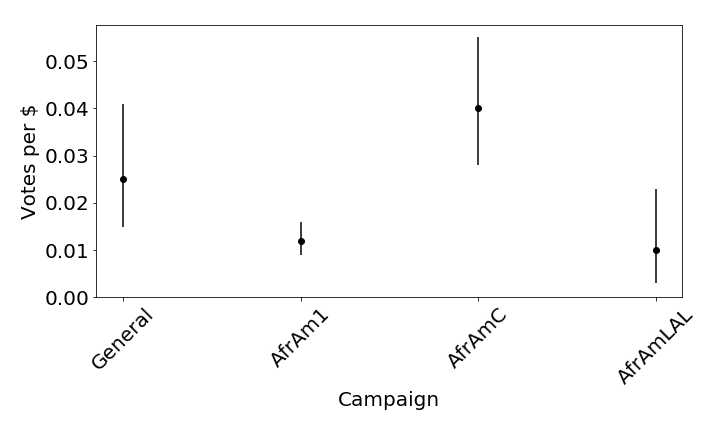}
\subcaption{\label{fig:durhamVotesDollar} Number of African American votes achieved per dollar spent on the various campaigns in Durham. Note that the targeted (AfrAm1) campaign is worse for this demographic than (General).}
}
\hspace{0.2in}
\parbox{.46\linewidth}{
\includegraphics[width = 0.45\textwidth]{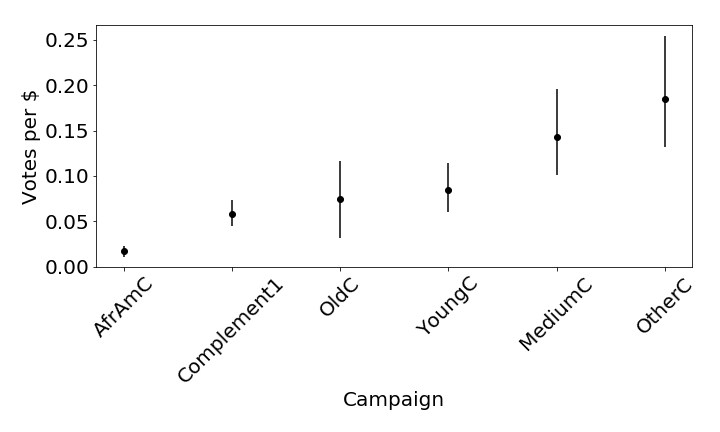}
\subcaption{\label{fig:greensboroVotesDollar} Votes of respective demographic per dollar spent on entire campaign for Greensboro 2019 Custom Audience and complement Campaigns. For the ``complement" campaign, we consider all its votes.}
}
\caption{The vertical bar shows 95\% CI around the mean (solid dot) assuming a Poisson process.}
\end{center}
\end{figure}

The sixth column of Table~\ref{fig:specificity} contains the specificity for African Americans of the custom campaign (AfrAmC) in Durham NC created with the list of African American voters. This has high specificity $\beta = 70\%$, relative to their $\alpha = 37\%$. Similarly, in Fig~\ref{fig:durhamVotesDollar}, this campaign has high votes per dollar and Pareto-dominates the (General) campaign. To validate this conclusion further, in the next experiment in Greensboro (Appendix~\ref{app:greensboro}),  we used 5 custom campaigns created using voter lists along two demographic axes: age -- (YoungC) aged 18-44, (MiddleC) aged 45-64, and (OldC) aged 65 and up; and whether someone was African American (AfrAmC) or not (OtherC). A sixth ``complement" campaign, (Complement1), was created targeting the whole city except those people part of any of the other voter-list based campaigns, in order not to fully exclude anyone from the advertising process. The final four columns of  Table~\ref{fig:specificity} shows the high specificity of the Greensboro age and race custom campaigns for the respective demographic, which validates the conclusion from Durham and is summarized below. 

\begin{observation}
Campaigns that create custom audiences using known lists of people within a demographic have high specificity for age and race criteria.
\end{observation}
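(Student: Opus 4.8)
The plan is to establish this observation empirically from the data already tabulated in Table~\ref{fig:specificity}, rather than by a mathematical derivation, since the claim is a statement about the measured specificity of the voter-list-based custom campaigns. First I would isolate exactly the campaigns that are built from known voter lists along the age and race axes: the (AfrAmC) campaign in Durham, together with the four Greensboro custom campaigns (YoungC), (MediumC), (OldC), and (AfrAmC). For each such campaign $j$ targeting its intended demographic $i$, I would read off the population fraction $\alpha_i$ and the achieved specificity $\beta_{ij}$ and verify directly that $\beta_{ij} \ge \alpha_i$, which is precisely the ``high specificity'' condition from the earlier definition.

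Carrying this out, the Durham race campaign records $\beta = 0.69$ against $\alpha = 0.37$; the three Greensboro age campaigns record $\beta = 1.0, 0.93, 0.91$ against $\alpha = 0.55, 0.30, 0.15$; and the Greensboro race campaign records $\beta = 0.70$ against $\alpha = 0.41$. In every case the point estimate of specificity comfortably exceeds the corresponding population fraction, so each of these campaigns clears the high-specificity threshold, in sharp contrast to the Facebook-inferred campaigns of the previous subsection.

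To make the conclusion robust to sampling noise, the next step is to treat the survey counts underlying each $\beta_{ij}$ as draws from a Binomial distribution (as the caption of Table~\ref{fig:specificity} assumes) and form a $95\%$ confidence interval for each specificity. The key check is that the \emph{lower} endpoint of each interval still lies strictly above the matching $\alpha_i$: this holds for all five campaigns --- for instance $0.57 > 0.37$ and $0.60 > 0.41$ for the two race campaigns, and with even larger margins for the three age campaigns. Establishing this strict separation upgrades the raw inequality $\beta_{ij} \ge \alpha_i$ into a statistically significant statement that the high specificity is not an artifact of limited survey data.

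The main obstacle I anticipate is not the arithmetic comparison but arguing statistical significance for the two race campaigns, where the separation between the CI lower bound and $\alpha_i$ is smallest ($\beta$ near $0.70$ against $\alpha$ near $0.40$); the age campaigns are essentially immediate because their separations are so wide. A secondary point worth flagging in the write-up is that the quantity measured here is conditional on Facebook's list-matching step and on which recruited voters ultimately complete a survey, so the observation should be read as a claim about the realized conversions of these custom campaigns rather than about the purity of the underlying voter lists themselves.
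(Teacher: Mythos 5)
Your proposal takes essentially the same route as the paper: the observation is supported empirically by reading off Table~\ref{fig:specificity}, comparing the point estimates $\beta$ against $\alpha$ for exactly the five voter-list campaigns you identify --- (AfrAmC) in Durham and (YoungC), (MediumC), (OldC), (AfrAmC) in Greensboro --- which is precisely the evidence Section~\ref{sec:greensboro} cites. Your additional check that each $95\%$ CI lower bound strictly exceeds the corresponding $\alpha_i$ is a mild strengthening that the paper supports implicitly through the table's CI rows but does not state explicitly; it is a refinement of, not a departure from, the paper's argument.
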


\subsection{Other Experimental Results}
\label{sec:misc}
We present other experimental results in Appendix~\ref{app:additional}. These are relevant to the model developed in the next section. We describe two that pertain to specificity and cost next.

\medskip \noindent {\bf Lookalike Campaigns have Low Specificity.} In an attempt to also efficiently target people that were not registered voters, we then created campaigns with "lookalike" audiences based on the respective filtered voter lists; these campaigns targeted users that Facebook thinks look like the voters in our custom audience.  These campaigns are described in detail in Appendix~\ref{app:details}. The lookalike campaign based on the African American voter list in Durham (AfrAmLAL) resulted in poor specificity (see fifth column in Table~\ref{fig:specificity}) and low votes per dollar (Fig~\ref{fig:durhamVotesDollar}). We present results for lookalike campaigns in Greensboro NC in Appendix~\ref{app:lookalike} showing the poor specificity of this feature even along the dimensions of age and gender.

\medskip \noindent {\bf Conversion Rates.} Though the voter list campaigns are highly specific, they can have large variation in conversion rates that we show in Fig~\ref{fig:greensboroVotesDollar}. Note that the conversion rates between the (AfrAmC) and the (OtherC) campaign differ by a factor of more than six.

\medskip \noindent {\bf Other Empirical Results.} In Appendix~\ref{sec:k_campaigns}, we show the variation in cost per conversion as a function of campaign audience size; we use this result in building our model in Section~\ref{sec:model}. In Appendix~\ref{app:budget}, we analyze how the campaign budget is spread along sub-demographics within a voter list; we use this result in our simulation study in Section~\ref{sec:simulate}. Finally, in Appendix~\ref{app:gender}, we analyze equity in outcomes along demographics that are orthogonal to the targeted demographics; we use this result in building our model in Section~\ref{sec:model}.

\section{Campaign Design for Fine-Grained Demographic Balancing}
\label{sec:algorithms}
\label{sec:joint}
The previous section showed that campaigns based on targeting lists of individuals have high enough specificity so that they can used in our application of demographic parity. Given this ability to accurately target demographics, we now study the question of how to construct cost-effective campaigns that can simultaneously achieve equality of outcome on fine-grained demographics such as all possible (race, age, gender) combinations. We formulate this as an optimization problem with constraints and assumptions that we motivate next.

\subsection{Modeling and Assumptions}
\label{sec:model}
Formally, we are given a set $D$ of $n$ demographics, where demographic $i$ comprises a fraction $\alpha_i\in[0,1]$ of the population. We assume these are disjoint, so that $\sum_i \alpha_i = 1$. These demographics could be all possible combinations of race, age, and gender.  For demographic $i \in D$, let $\phi_i$ denote the number of conversions per dollar for this demographic, had we constructed a campaign targeting the voter list restricted to this demographic.  A daily budget of $B$ can be split between the campaigns. 

\medskip \noindent {\bf Number of Campaigns.} First note that we cannot simply construct one campaign with the entire voter list. Even in the ideal case that the platform partitions the budget equitably, as mentioned before, different demographics can have large variation in conversion rates $\phi_i$ that we show in Fig~\ref{fig:greensboroVotesDollar}. Note that the conversion rates between the (AfrAmC) and the (OtherC) campaign differ by a factor of more than six. Further, splitting voter lists by one criterion such as age may produce inequitable outcomes along another dimension such as gender; see Appendix~\ref{app:gender}.

We could go the other extreme and create one custom campaign for each of the $n$ demographics, assigning budgets to them achieve parity in conversions. However, this will lead to too many campaigns each of which is constructed using a small voter list and small budget. This approach can lead to too few matches with Facebook's population, thereby violating privacy of individual users. In Appendix~\ref{sec:k_campaigns}, we present experimental evidence that splitting the voter list of a campaign into smaller parts with proportionally lower budget has {\bf higher cost per conversion} and higher variance in this value compared to larger campaign, making the smaller campaigns much more expensive and harder to learn parameters of. This effect is likely by design -- smooth delivery algorithms~\cite{Borgs,probpacing} can cause high variance effects for small budgets; further, the platform can conceivably put a premium on small campaigns to penalize discriminatory behavior. 

To make the campaigns cost-effective and privacy preserving, we therefore enforce a constraint $k$ on the total number of campaigns we can create. Continuing our formalism, our goal is to partition the $n$ demographics into $k$ disjoint campaigns, each targeting the relevant subset of the demographics (in our case, via targeting their voter list). A campaign $j$ is represented as a tuple ($B_j$, $S_j$), where $B_j$ is the campaign's daily budget and $S_j \subseteq D$ is the subset of demographics targeted by that campaign. Note that $\sum_j B_j \le B$.  We assume the $S_j$ are disjoint and $\cup_j S_j \subseteq D$. This allows for the possibility of omitting some demographics from the campaigns. 

\medskip \noindent {\bf Proportionality Assumption.} 
If a campaign spans more than one demographic, we need an assumption on how the budget of the campaign is split among the constituent demographics in order to make the algorithm design and parameter learning tractable. Towards this end, we make a proportionality assumption: Given a campaign $j$ with demographics $S_j$, we assume its budget is allocated to demographic $i \in S_j$ in proportion to a known parameter, $q_i$, that depends only on this demographic and not on the other demographics in set $S_j$. Under this assumption, the expected number of votes per dollar for demographic $i \in S_j$ (that is, the conversion rate) is therefore:
\begin{equation}
\label{eq:eta2}
    \eta_{ij} = \phi_i \frac{q_i}{\sum_{\ell \in S_j} q_{\ell}} 
\end{equation} 

We now argue that proportionality is a simple and easily implementable contract between the platform and the advertiser. If the platform uses {\em probabilistic pacing}~\cite{agwy:pacing14} to smooth the daily budget $B$, we can idealize its behavior as follows. Suppose with infinite campaign budget, an individual's pricing and relevance estimation for the ad does not depend on which other demographics are in the campaign. In this infinite budget case, let $q_i$ be the money that would have been spent on demographic $i$.  Further suppose the audience of the campaign arrives in random order uniformly over time. Then the total spend on campaign $j$ in the infinite budget setting is $\hat{B} = \sum_{\ell \in S_j} q_{\ell}$. Under probabilistic pacing, the platform considers each impression independently with probability $B/\hat{B}$. Clearly, the expected number of conversions for any subset of impressions is proportional, so that those for demographic $i$ will be $q_i B/\hat{B}$, hence showing proportionality. Note that proportionality is more general than the assumption that spend should be proportional to audience size since the corresponding demographics could have very different pricing characteristics.  Indeed, this assumption allows the platform considerable latitude in developing relevance and pricing criteria, and hence can be thought of as a "minimal contract" between an ad platform and an advertiser. 


\subsection{Objective Function}
Under the proportionality assumption (Eq~(\ref{eq:eta2}), we seek to construct at most $k$ disjoint campaigns and allocate budgets to them, so that the total budget is at most the daily budget $B$. Suppose in the outcome, the fraction of conversions for demographic $i$ is $\rho_i \alpha_i$ for $\rho_i \ge 0$. We  optimize an objective function of the form $\sum_{i \in D} f(\rho_i)$ corresponding to demographic parity. 

We consider the {\bf $\gamma$-step function} objective where we fix a parameter $\gamma \in [0,1]$, and define $f(\rho_i) = 1$ if $\rho_i \ge \gamma$ and $0$ otherwise. This corresponds to maximizing the number of demographics whose representation in the votes is at least a $\gamma$ factor of their representation in the population. 

\medskip \noindent {\bf Mathematical Program.} Since we are interested in proportions in outcomes, we can phrase the problem somewhat differently. Our goal is to construct at most $k$ disjoint campaigns and allocate budgets $b_j$ to them, where $b_{j}$ denotes the campaign budget scaled down by the overall number of conversions. Recall the definition of $\eta_{ij}$ from the proportionality assumption, Eq~(\ref{eq:eta2}). Suppose the fraction of conversions (votes) for demographic $i$ is $\rho_i \alpha_i$ for $\rho_i \ge 0$. We want: (i) to enforce the constraint that the expected number of conversions is exactly one:
$$ \sum_{i \in D} \rho_i \alpha_i = \sum_{j = 1, 2, \ldots, k} \sum_{i \in S_j} b_j \eta_{ij}  = 1 $$
and (ii) to optimize an objective function corresponding to demographic parity in conversions:
\begin{equation}
    \label{eq:obj}
    \mbox{Maximize} \sum_{i \in D} f(\rho_i) 
\end{equation} 

The $b_j$ values above are the budgets needed to {\em generate one vote} in expectation. Given a daily budget $B$, we will simply allocate it to the campaigns in proportion to the $b_j$ values, in other words, the daily budget of campaign $j$ is $B_j = \frac{b_j}{\sum_{j'} b_{j'}} B$. This preserves the ratios $\rho_i$ for all $i \in D$.

\medskip
In the integer non-linear program below,  $\ell$ denote campaigns. $x_{i \ell}$ is the indicator variable whether demographic $i$ successfully achieves its $\gamma \alpha_i$ share from campaign $\ell$, and $b_{\ell}$ is as defined above.  The objective captures the number of successful demographics. The first constraint captures that each successful demographic was assigned to at most one campaign; the second constraint, when $x_{i \ell} = 1$ says that parity was achieved for demographic $i$; and the third constraint encodes that $\{b_{\ell}\}$ generate one vote overall.  
\[
\mbox{Maximize} \sum_{i \in D} \sum_{\ell=1}^k x_{i \ell} \]
\[
\begin{array}{rcll}
\sum_{\ell=1}^k x_{i \ell} & \le & 1 & \forall i \in D \\
b_{\ell} \left(\phi_i q_i\right)/\left(\sum_{j \in D} q_j x_{j \ell}\right) & \ge & \gamma \cdot \alpha_i x_{i\ell}   & \forall i \in D, \ell = 1,2, \ldots, k \\
\sum_{\ell = 1}^k b_{\ell} \left(\left(\sum_{i \in D} \phi_i q_i x_{i \ell}\right)/\left(\sum_{i \in D}  q_i x_{i \ell}\right) \right)& = & 1 \\
x_{i \ell}   \in   \{0,1\} & \mbox{and} & b_{\ell}  \ge  0 & \forall i \in D, \ell = 1,2, \ldots, k
\end{array}
\]

\subsection{Efficient Algorithm} 
\label{sec:step}
Despite the non-linearity in the above formulation, our main algorithmic contribution is the following surprising theorem. 

\begin{theorem}
\label{thm:main}
The joint campaign design and budget allocation problem assuming a $\gamma$-step function objective can be exactly solved in $O(k n^5)$ time by dynamic programming.
\end{theorem}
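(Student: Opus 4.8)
The plan is to \emph{linearize} the success conditions so that the coupling denominator $\sum_{\ell \in S_j} q_\ell$ disappears from the objective, reducing the integer nonlinear program to a clean combinatorial partition problem, and then to solve that problem by dynamic programming over demographics sorted by a single derived key. First I would observe that in any optimal solution we may assume every demographic placed in a campaign actually attains its $\gamma$-share (is ``successful''): an unsuccessful demographic only enlarges $Q_\ell = \sum_{i \in S_\ell} q_i$ and consumes conversions, which hurts every other member of its campaign, so dropping it is never worse. Fixing a campaign $\ell$ with set $S_\ell$, the success condition $b_\ell \phi_i q_i / Q_\ell \ge \gamma \alpha_i$ is equivalent to $b_\ell \ge \gamma Q_\ell\, r_i$ where $r_i := \alpha_i/(\phi_i q_i)$, so the cheapest feasible budget is $b_\ell = \gamma Q_\ell \max_{i \in S_\ell} r_i$. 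Substituting into the campaign's conversion count, the factor $Q_\ell$ cancels and the campaign contributes exactly $\gamma\,(\max_{i \in S_\ell} r_i)(\sum_{i \in S_\ell} \phi_i q_i)$ to the total number of conversions — a product of a \emph{max} over one quantity and a \emph{sum} over another. This cancellation, which eliminates the proportionality nonlinearity, is the crux of the argument.

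I would then dispose of the equality normalization: since scaling all $b_\ell$ upward preserves every success inequality, whenever the minimum total conversions is below $1$ we can scale up to meet $\sum_\ell(\cdots)=1$ exactly, and when it exceeds $1$ the instance is infeasible. Hence the third constraint reduces to requiring $\gamma \sum_\ell (\max_{i \in S_\ell} r_i)(\sum_{i \in S_\ell} \phi_i q_i) \le 1$, and the whole problem becomes: partition a subset of $D$ into at most $k$ groups so as to maximize the number of demographics covered, subject to this single cost bound. The right quantity to tabulate is therefore the \emph{minimum cost achieving a given coverage}, not a discretization of the continuous budget.

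The key structural step is an exchange lemma. Sort the demographics so that $r_1 \le r_2 \le \cdots \le r_n$, and call the maximum-$r$ member of a campaign its \emph{leader}. I would argue there is an optimal solution in which each campaign is a contiguous block of this order (up to skipped, uncovered elements): if a covered demographic $x$ sits in a campaign whose leader exceeds the leader of some other campaign that still dominates $r_x$, then reassigning $x$ to the latter campaign strictly lowers its cost contribution $\gamma\, r_{\mathrm{leader}}\, \phi_x q_x$ while changing neither leader nor the coverage count. Iterating sends every covered demographic to the campaign with the smallest leader at least its own $r$-value, which is exactly the interval structure. Processing the sorted list in \emph{decreasing} $r$-order then reveals each block's leader as its first covered element, so the multiplier $\gamma\, r_{\mathrm{leader}}$ is known before any cost is charged. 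The dynamic program keeps state (current position, number of campaigns opened so far, number of demographics covered so far, and the current leader) with value equal to the minimum cost; the transitions at each demographic are to skip it, append it to the open campaign (adding $\gamma\, r_{\mathrm{leader}}\, \phi_i q_i$), or open a new campaign with it as leader. The answer is the largest coverage count for which some terminal state has minimum cost at most $1$. This can be implemented in polynomial time, and a direct accounting of the state and transition counts gives the stated $O(kn^5)$ bound.

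I expect the main obstacle to be the transformation and exchange argument rather than the DP bookkeeping. One must verify carefully that (i) restricting to all-successful campaigns loses nothing, (ii) the denominator $Q_\ell$ genuinely cancels so the objective becomes separable across campaigns as ``(max of $r$) times (sum of $\phi_i q_i$),'' and (iii) the equality normalization is harmless so that minimum cost per coverage is exactly the right DP value. Once these are established, the interval structure and the dynamic program follow routinely.
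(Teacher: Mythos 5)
Your proposal is correct and follows essentially the same route as the paper: your cancellation of $Q_\ell$ via $r_i = \alpha_i/(\phi_i q_i)$ is exactly the paper's change of variables $w_i = q_i\phi_i$, $Y_j = b_j/\sum_{\ell \in S_j} q_\ell$, $\beta_i = \alpha_i/(q_i\phi_i)$; your scale-up argument matches the paper's relaxation of the one-vote equality to Constraint~(\ref{eq:atmostone}); and your exchange lemma is the paper's Claim~\ref{claim:ahead} (assign each satisfied demographic to the campaign with the smallest feasible $Y_j$), yielding the same interval structure over the sorted $\hat{\beta_i}$. The only divergence is DP bookkeeping: the paper tabulates $V_j(i,m)$ with an inner knapsack $K(\ell,i,m)$ over each block, giving $O(kn^5)$, whereas your left-to-right skip/append/open DP carrying the current leader in the state is a streamlined equivalent that in fact runs in $O(kn^3)$, comfortably within the claimed bound.
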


The proof first involves a non-trivial change of variables to eliminate the non-linearity. We subsequently show that the problem is amenable to dynamic programming. 


\medskip \noindent {\bf Transformed Formulation.}  We first use Eq~(\ref{eq:eta2}) to transform the problem and get rid of the ratios in $\eta_{ij}$,  so that it is amenable to efficient computation. We set 
$$w_i = q_i \phi_i; \qquad Y_j = \frac{b_j}{\sum_{\ell \in S_j} q_{\ell}}; \qquad\beta_i = \frac{\alpha_i}{q_i \phi_i}$$ 
where $\phi_i$ and $q_i$ are the conversion rates and spend proportionality factors respectively for demographic $i$.

Let $\sigma(i) = j$ if $i \in S_j$. Then, the number of conversions for demographic $i \in S_j$ is 
$$ \rho_i \alpha_i = b_j \eta_{ij} = Y_{\sigma(i)} w_i \qquad \implies \qquad \rho_i = \frac{Y_{\sigma(i)}}{\beta_i}.$$

Therefore, the input is $(w_i, \beta_i)$ for each demographic, and the output, for each campaign, is the number $Y_j$ and set of demographics $S_j$. The constraint of one vote in expectation becomes
\begin{equation}
    \label{eq:exactlyone}
 \sum_j Y_j \sum_{i \in S_j} w_i = 1
\end{equation}
Given the output $Y_j$ and $S_j$, we can calculate $b_j$ as $b_j = Y_j \sum_{\ell \in S_j} q_{\ell}$. The tuple $(S_j, Y_j)$ therefore completely describes campaign $j$. In order to capture demographics not assigned to any campaign, we create a dummy campaign with $b_0 = Y_0 = 0$, and $S_0$ as the set of demographics assigned to it. Therefore, if $\sigma(i) = 0$, demographic $i$ is not assigned to any campaign. 

\medskip
With the above transformation, Objective~(\ref{eq:obj}) becomes:
$$ \mbox{Maximize} \sum_{i \in D} f\left(\frac{Y_{\sigma(i)}}{\beta_i} \right)$$

For the $\gamma$-step objective described above, since $f$ is monotonically non-decreasing, we can replace Constraint~(\ref{eq:exactlyone}) with the constraint that there is {\em at most} one vote in expectation; any solution satisfying the latter constraint can be converted to satisfy the former constraint by scaling up the $Y_j$'s, while not decreasing the objective. Therefore, for any monotone function $f$ (see Section~\ref{sec:other} for more on monotone functions), the following constraint suffices:
\begin{equation}
    \label{eq:atmostone}
     \sum_j Y_j \sum_{i \in S_j} w_i \le 1
\end{equation}

%

\medskip \noindent {\bf Dynamic Programming.}  Recall that for parameter $\gamma$, $f(x) = 1$ if $x \ge \gamma$, and $0$ otherwise. Let $\hat{\beta_i} = \gamma \beta_i$. Then the objective is:
$$\mbox{Maximize} \sum_{i \in D} \mathbf{1}_{Y_{\sigma(i)} \ge \hat{\beta_i}}$$

Given any setting of $Y_1 \le Y_2 \le \cdots \le Y_k$, let $\theta_i = \mbox{argmin}_j \{ Y_j \ge \hat{\beta_i}\}$. Note that if $i \in D$ is assigned to a campaign (and hence contributes to the objective), it must be assigned to a campaign with $Y_j \ge \hat{\beta_i}$. Regardless of which campaign it is assigned to, the contribution to the objective is $1$; however, assigning to a campaign with smaller $Y_j$ only makes Constraint (\ref{eq:atmostone}) more feasible. Therefore, we have the following claim:

\begin{claim}
\label{claim:ahead}
If $i \in D$ is assigned to a campaign (and hence contributes to the objective), it must be assigned to campaign $\theta_i$, that is, $\sigma(i) = \theta_i$.
\end{claim}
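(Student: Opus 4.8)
The plan is to prove the claim by a simple exchange argument that exploits the monotonicity of the budget constraint. First I would fix the sorted budget values $Y_1 \le Y_2 \le \cdots \le Y_k$ and treat the assignment $\sigma$ as the only remaining degree of freedom. With the $Y_j$ held fixed, Constraint~(\ref{eq:atmostone}) can be rewritten as a sum over the assigned demographics,
$$\sum_{i \,:\, \sigma(i) \neq 0} Y_{\sigma(i)}\, w_i \le 1.$$
The crucial structural fact is that $w_i = q_i \phi_i \ge 0$ for every demographic, so the left-hand side is non-decreasing in each term $Y_{\sigma(i)}$. Consequently, for a fixed objective value, reassigning a demographic to a campaign with a \emph{smaller} budget value $Y_j$ can only relax the constraint, never tighten it.

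Next I would take any demographic $i$ that is assigned and contributes to the objective. By the definition of the indicator $\mathbf{1}_{Y_{\sigma(i)} \ge \hat{\beta_i}}$, contributing forces $Y_{\sigma(i)} \ge \hat{\beta_i}$; were this not so, $i$ would add nothing to the objective and we could leave it unassigned. Since $\theta_i = \operatorname{argmin}_j \{ Y_j \ge \hat{\beta_i} \}$ picks out the campaign of smallest budget among those clearing the threshold, and the $Y_j$ are sorted, we have both $Y_{\theta_i} \ge \hat{\beta_i}$ and $Y_{\theta_i} \le Y_{\sigma(i)}$. I would then perform the swap: move $i$ from $\sigma(i)$ to $\theta_i$, keeping all $Y_j$ and all other assignments unchanged. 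Because $Y_{\theta_i} \ge \hat{\beta_i}$, the indicator for $i$ remains $1$, so the objective is preserved; and because $w_i \ge 0$ with $Y_{\theta_i} \le Y_{\sigma(i)}$, the constraint's left-hand side changes by $w_i\,(Y_{\theta_i} - Y_{\sigma(i)}) \le 0$, so feasibility is preserved. Applying this swap to every contributing demographic produces a solution of equal objective value in which $\sigma(i) = \theta_i$ for all such $i$, which is exactly the claim.

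I do not expect any genuine obstacle here beyond stating the argument cleanly: the entire proof rests on the sign condition $w_i \ge 0$, which is what makes the budget constraint monotone in the $Y_{\sigma(i)}$ and lets the exchange go through with no case analysis. The real value of the claim lies downstream — it collapses the combinatorial freedom of choosing \emph{which} threshold-clearing campaign absorbs each demographic down to the single canonical choice $\theta_i$, and it is this reduction that I would subsequently use to make the dynamic programming formulation tractable.
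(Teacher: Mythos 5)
Your proof is correct and is essentially the paper's own argument: the paper likewise observes that a contributing demographic must lie in a campaign with $Y_{\sigma(i)} \ge \hat{\beta_i}$, and that reassigning it to the smallest such campaign $\theta_i$ leaves its unit contribution to the objective unchanged while only making Constraint~(\ref{eq:atmostone}) more feasible. Your explicit exchange step, resting on the sign condition $w_i = q_i \phi_i \ge 0$, simply spells out the monotonicity that the paper invokes in a single line.
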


Therefore, if we sort $\hat{\beta_i}$ in increasing order (and number them as $1,2, \ldots$ in that order), each optimal $Y_j$ will correspond to one of these values, and all $i$ whose $\hat{\beta_i}$ lie between $Y_{j-1}$ and $Y_j$ will either lie in $S_j$ or not be assigned to any campaign.

\medskip
Given the above observation, let $V_j(i,m)$ denote the best set of $j$ campaigns, where $Y_j = \hat{\beta_i}$, that satisfy at least $m$ of the first $i$ demographics and that minimizes the expected number of conversions. The final goal is find the maximum $m$ so that $\min_{i=1}^n V_j(i,m) \le 1$.

For $j > 1$, we have the recurrence:
$$ V_j(i,m) = \max_{1 \le \ell < i, m' \le m} \left\{ V_{j-1}(\ell,m-m') + K(\ell,i,m') \right\}$$
For $j = 1$, we have $V_1(i,m) = K(1,i,m)$.
Here, $K(\ell,i,m)$ is the minimum expected number of conversions needed to satisfy at least $m$ of the demographics $\{\ell+1, \ldots, i\}$ by assigning to a single campaign with $Y = \hat{\beta}_i$. In other words, $K(\ell,i,m)$ is the solution to a {\sc Knapsack} problem. Here, item $i$ having unit profit and size $w_c \hat{\beta_i}$. The goal is to find the smallest knapsack size to achieve profit at least $m$. This can be exactly solved in $O(n^2)$ time by dynamic programming, and by a straightforward calculation of running time,  implies Theorem~\ref{thm:main}.

\subsection{Other Objectives}
\label{sec:other}
The basic framework described above extends to more general objective functions $f(\rho)$. We consider two general functions $f$ that we motivate next.

\medskip \noindent{\bf Monotone functions.} The $\gamma$-step  function does not give any credit to demographics that are even slightly less satisfied than $\gamma \alpha_i$. We can generalize this to smoother functions that penalizes demographics whose fraction in the conversions is less than the fraction in the population, but there is no penalty for a demographic being over-represented. In other words, for monotone functions, we have $f(\rho_i)$ is monotonically non-decreasing for $\rho_i \in [0,1]$ with $f(0) = 0$ and $f(\rho_i) = 1$ if $\rho_i \ge 1$. This not only generalizes $\gamma$-step functions, but also captures functions of the form $f(\rho_i) = \min(1,\rho_i/\gamma)$, which linearly penalizes if the fraction of voters deviates below $\gamma \alpha_i$, and does not penalize if the fraction exceeds this amount. In Appendix~\ref{sec:monotone}, we present a $2$-approximation algorithm for arbitrary monotone functions.

\medskip \noindent{\bf Variation Distance.} The objective functions described above give equal importance to satisfying small demographics (small $\alpha$) and large ones (large $\alpha$), since the objective normalizes each demographics' vote fraction by $\alpha$. We could alternatively consider an objective that minimizes the variation distance between the population fractions $\{\alpha_i\}$ and the vote fractions $\{\rho_i \alpha_i\}$, so that $f(\rho_i) = - \alpha_i \left| 1 - \rho_i \right|$. Such a global approach may ignore small demographics if there are a few of them, which has the advantage of potentially reducing the cost per vote.  In Appendix~\ref{sec:tvd}, we present an approximation for the variation distance objective.

\subsection{Parameter Learning} 
\label{sec:online}
In developing the model and algorithms, we assume that for each demographic $i$, the conversion rate $\phi_i$ and the proportionality in spend $q_i$ values are known. In practice, we will have to learn these values. If the platform performs probabilistic pacing, it can reveal the scaling factor $\sum_{i \in S_j} q_i$ for each campaign $j$. If we split the $n$ demographics in a different fashion into the $k$ campaigns over $n/k$ days, then we get $n$ linearly independent equations, from which we can infer the $n$ values $q_i$. 
We can now use the number of conversions $\eta_{ij}$ per dollar  spent on the campaign to infer the conversion rates $\phi_i$ via Eq~(\ref{eq:eta2}). 
We leave designing bandit algorithms to learn these parameters with low regret, or using estimates from another city as prior, as a future research direction.

\subsection{Simulation on Greensboro, NC Data}
\label{sec:simulate}
We simulated our algorithms using data from the Greensboro 2019 Participatory Budgeting Election and the North Carolina voter list to show that had the primary goal of our experiment been to achieve demographic balance, we could have achieved it via suitable campaign design and budget allocation. 

We considered the the $\gamma$-step function objective presented above. The set of demographics $D$ is defined with three dimensions: African American and non-African American race; young, middle, and older ages; and male and female gender. A demographic $i \in D$ is a race-age-gender combination so that $|D| = 12$.   We set proportionality values $q_i$ to the voter list sizes for the corresponding demographics.  In Appendix~\ref{app:budget}, we present experimental evidence that Facebook's allocation mechanism approximately satisfies proportionality for this setting of $q_i$, showing it is a reasonable approximation to make in our simulation. The conversion rates $\phi_i$ were inferred from the custom campaigns (AfrAmC) and (OtherC) (Appendix~\ref{app:greensboro}) targeted toward African Americans and non-African-Americans  respectively. Using the amount of money spent for each age-gender combination reported by Facebook, if $v_i$ is the number of surveys obtained and $b_i$ is the spend by demographic $i$ from the corresponding campaign, then we set the conversion rate $\phi_{i} = \frac{\max(1,v_{i})}{b_{i}}$. 


\small
\begin{table}[htbp] 
\parbox{.5\linewidth}{
\centering
 \begin{tabular}{|l | c | c | c | c |} 
    \hline 
     &$\gamma$=0.5 &0.7 &0.9 &1.0\\   
    \hline  
     Count &11 & 10 & 9 & 9 \\ 
    CPV 
    & \$23.73 & \$32.07 & \$39.30 & \$43.67 \\ 
    Size ratio 
    & 0.438 & 0.336 & 0.202 & 0.202 \\ 
    \hline 
    \end{tabular} 
    \subcaption{$k=2$ Campaigns}
}
\parbox{.5\linewidth}{ %
    \centering
    \begin{tabular}{| c | c | c | c |}
    \hline 
        $\gamma$=0.5 &0.7 &0.9 &1.0\\ 
    \hline 
    12 & 12 & 11 & 11 \\
    \$22.47 & \$25.09 & \$30.62 & \$34.02 \\ 
     0.068 & 0.076  & 0.076 & 0.076 \\
    \hline 
    \end{tabular} 
    \subcaption{$k=4$ Campaigns}
}
\caption{Count is the number of demographics satisfied; CPV is the combined budget of the $k$ campaigns to achieve one vote; and size ratio is the ratio of the size of the smallest voter list over all campaigns to that of the total voter list.} 
\label{table:sim_results}
\end{table}
\normalsize

\noindent {\bf Results.}
In Table \ref{table:sim_results}, we present the results of simulating our algorithm for two different values of $k$, and various settings of $\gamma$. 
We show that a small number, $k=4$, of  campaigns suffice to satisfy $11$ out of the $12$ demographics exactly, or all $12$ demographics to $\gamma \ge 70\%$ of their fair proportions. As expected, the results show an increased CPV for increasing values of $\gamma$, and increasing the number of campaigns results in smaller campaigns.
However, the final row in the table also shows that bounding the number of campaigns is a reasonable approximation to ensuring large audiences. 

We emphasize that this is only a proof of concept simulation. We used fixed settings of $\{q_i, \phi_i\}$, where the former is an approximation and the latter learnt from data. Had we run the resulting campaigns on Facebook,  we could use the following heuristic to periodically adjust  campaigns and budgets: Fix $q_i$ to be equal to voter list size. Periodically recalculate the $\phi_i$ based on actual conversions $\eta_{ij}$ via Eq~(\ref{eq:eta2}), recompute the optimal campaigns and budgets using Theorem~\ref{thm:main}, and update them accordingly.

\section{Discussion and Open Questions}
\label{sec:discuss}
Our work  suggests several directions for future research.  We divide these into four categories.

\medskip \noindent {\bf The Platform Perspective.}  We addressed the problem of achieving demographically balanced outcomes from the advertiser's perspective. A natural question that arises is why not put all the onus on the platform? An obvious answer is that not all advertisers care about such equity at the expense of overall ROI, and further,  we use equity in outcome that a platform cannot reasonably measure. A more nuanced answer is that it is not clear if platforms should even provide explicit support for advertising to a diverse cohort, since a tool for cost-effectively allowing an advertiser to reach specific audiences can also be used for discriminating against protected groups and privacy-loss. It is an interesting open question to study additional tools a platform can provide to enable fairness, and the security and privacy implications of such tools.  

\medskip \noindent {\bf The Advertiser Perspective.} While voter lists can be a part of the solution for recruiting a diverse cohort, they can  exclude certain segments of society, particularly, individuals who have not voted in the past.  Hence, it would be desirable to develop additional mechanisms for reliably reaching a target demographic. We did not find ``lookalike targeting" on Facebook based on race to be very effective for this purpose, as we detailed in Section~\ref{sec:misc} and Appendix~\ref{app:lookalike}. Similarly, we also allocated a minimum portion of our budget to a complementary campaign; however, this again negates much of the effect of creating a balanced cohort. 

It would also be interesting to explore using a combination of online advertising and social seeding (e.g.~\cite{tambe}). While social influence maximization has been well studied (e.g.~\cite{kkt:social}), the focus has not been on getting a diverse cohort.  
Further, an advertiser-centric approach can allow the advertiser to use multiple platforms to achieve the desired diversity when different platforms are more cost-effective or have a bigger reach for different demographics. For instance, the City of Greensboro independently also advertised on Instagram and through YouTube with two targeted campaigns based on household income being over and under the city median.  This resulted in a more specific campaign for African Americans than the campaigns in Section~\ref{sec:pareto}, but less effective (high cost per vote).  Similarly, it would be interesting to check how generalizable our insights are to settings beyond civic engagement or to other fairness criteria.

\medskip \noindent {\bf The Normative Perspective.} In addition to the ethical questions discussed above, our work raises other normative questions that need to be carefully studied. First, it is not obvious what the right measure of diversity should be: should an advertiser aim for equitable amounts of budget spent for different demographics, equitable number of impressions, or equitable number of participants recruited?  These objectives can be in conflict as shown in Section~\ref{sec:pareto}: optimizing for equitable proportions would result in a {\em lower} absolute number of African American votes. Further, how do we determine the right granularity for measuring diversity? Not surprisingly, we observe in our experiments that if we measure diversity separately for the coarse demographics of age and race,  we might actually do worse for gender; we present the result in Appendix~\ref{app:gender}.

\medskip \noindent {\bf Algorithmic Challenges.} The model and algorithms in Section~\ref{sec:algorithms} do not currently have additional constraints such as limits on cost per vote, or limiting campaigns to have some nice structure such as contiguous age groups or geography. Further, since we are optimizing proportions of votes,  even a few days' worth of mis-allocated budget based on inaccurate parameter settings can significantly skew the proportions. Can we devise low-regret learning algorithms for the parameters?

\paragraph{Acknowledgment:} This work is supported by NSF grants CCF-1637397 and IIS-1447554; ONR award N00014-19-1-2268; and an Adobe Digital Experience research award. We thank the cities of Durham and Greensboro, NC for their help and support; Brandon Fain, Nikhil Garg, and Aleksandra Korolova for several useful discussions; and Anilesh Krishnaswamy and Sukolsak Sakshuwong for help with the infrastructure for data collection.

\bibliographystyle{abbrv}
\bibliography{bibliography/refs}

\appendix
\section{Details of Experiments}
\label{app:details}
In this section, we present the exact campaigns we created in Durham and Greensboro, and the rationale for doing so.  

In our implementations, we considered three age groups (young, middle, and old), two racial categories (African American, non-African American), two ethnic categories (Hispanic, non-Hispanic), and two genders (male, female) as the possible demographics on which we seek balance. In our experiments, the parallel campaigns differed in the targeted audience and allocated budget. We considered various campaigns described in our narrative below. 

For Durham, the total amount spent was \$ 8,617.11 resulting in 727,170 total impressions and a total of 4920 English language surveys was collected over 31 days (surveys include surveys resulting from organic participation). For Greensboro, the total amount spent was \$ 9,681.56 resulting in 1,015,204 impressions and a total of 3,584 English language surveys was collected (including from organic participation).

Our campaigns respected some ethical boundaries: We did not use options that shared private data with third parties, for instance by allowing the advertising platform to track conversion until the survey ('pixel') or by creating a lookalike audience based on completed surveys or verification data. We also did not acquire databases with personal data that are not part of the public record.

\subsection{Experiment in Durham NC: Facebook-inferred Targeting}
\label{app:durham}
We ran the following sets of parallel campaigns in Durham. For each campaign, we  required that targeted individuals were located within the city (i.e. they were eligible to participate) and 18 years or older (for consent reasons).
\begin{itemize}
    \item[D1] 4 campaigns using Facebook-inferred characteristics. A campaign targeting the general population (General), interest in Hispanic culture and Hispanic Multicultural Affinity (Hisp1), interest in African-American culture and African-American Multicultural Affinity (AfrAm1) and the general population excluding everyone with any college experience (Educ).
    \item[D2] 2 campaigns (Hisp2 and AfrAm2) using Facebook-inferred characteristics, similar to D1 but only targeting multicultural affinity.
    \item[D3] 5 campaigns. Besides Hisp1 and AfrAm1, also two custom audiences (HispC and AfrAmC) created from Hispanic/Latino and African American voter lists and one 'lookalike' audience (AfrAmLAL) that Facebook created based on AfrAmC. 
\end{itemize}

We started in Durham naively with the targeting parameters (D1) as they were provided by Facebook. As described in Section~\ref{sec:pareto}, we quickly learned that all three targeted campaigns were quite imprecise with regards to their actual outcome. On the suggestion of a Facebook support agent, we tried targeting only by multicultural affinity (D2). As described in Appendix~\ref{app:afram2}, this was also very ineffective. This led us to create custom audiences using voter lists (D3). Finally, we added a lookalike audience based on one of the custom audiences, to see if we could use the voter lists to effectively target people of the same demographics that never voted in an election. These results for lookalike audiences are described in Section~\ref{sec:greensboro}.

\subsection{Experiment in Greensboro NC: Custom Audiences using Voter Lists}
\label{app:greensboro}
We used the following sets of parallel campaigns in Greensboro. As before, for each campaign, we required that targeted individuals were located within the city (i.e. they were eligible to participate) and 18 years or older (for consent reasons).

\begin{itemize}
    \item[G1] 6 campaigns, 5 of which were custom using overlapping subsets of voter lists targeting age 18-44 (YoungC), age 45-65 (MiddleC), age 65+ (OldC), African-American voters (AfrAmC), all voters except African-American voters (OtherC) and finally a complement targeting all people in the city except those already matched in G1 (Complement1).
    \item[G2] The same as G1, but (AfrAmC) was randomly split into 4 non-overlapping custom audiences (AfrAmC-1, AfrAmC-2, AfrAmC-3 and AfrAmC-4).
    \item[G3] 8 campaigns, 5 of which (YoungLAL, MiddleLAL, OldLAL, AfrAmLAL, OtherLAL) were lookalike audiences based on G1, a lookalike audience based on an audience of all female voters (FemaleLAL) and all male voters (MaleLAL) and the complement of all other G3 lookalike audiences (Complement3). Lookalike audiences were created by Facebook with a 5\% (age) or 10\% (race/gender) target, resulting in a national audience that is then filtered by location (Greensboro) and age (18+). 
    \item[G4] 6 campaigns, using Facebook-inferred characteristics targeting age 18-44 (YoungF), age 45-65 (MiddleF), age 65+ (OldF), interest in African-American culture and African-American multicultural affinity (AfrAmF), female users (FemaleF) and male users (MaleF). 
\end{itemize}

We started with (G1) and included the ``complement" campaign  targeting the whole city except those people part of one of the other campaigns, in order not to fully exclude them from the advertising process. These results are presented in Section~\ref{sec:greensboro}. In order to test a hypothesis that targeting smaller audiences is more expensive (see Appendix~\ref{sec:k_campaigns}), we continued in (G2) with the same campaigns, but split one of the audiences randomly into four audiences of equal size. 

In (G3), we explored the effectiveness in targeting by lookalike audiences along different demographic axes, by creating custom audiences sliced by age (3), race (2) and gender (2) and again one complement campaign; these results are presented in Appendix~\ref{app:lookalike}. We also further explored the effectiveness of targeting by Facebook-inferred characteristics in (G4). In (G4), non-African Americans could not be targeted because Facebook does not allow to exclude people African American multicultural affinity from a campaign. Our preliminary analysis indicates that targeting by proxies for race is as ineffective as in Durham; however, there are too few votes to assess confidence and we do not report these results. 

\section{Additional Empirical Analysis}
\label{app:additional}
We now provide additional empirical results that we allude to in Sections~\ref{sec:targeting},~\ref{sec:algorithms}, and~\ref{sec:discuss}.

\subsection{Performance of AfrAm2 Campaign in Durham}
\label{app:afram2}
In set (D2), when ran one ad targeting African American multicultural affinity. Assuming a Poisson distribution for African-American votes generated, the ad had an average of $0.002$ African American votes per dollar, with 95\% confidence interval $[0,0.013]$. The specificity was $0.09$, with a $95\%$ confidence interval of $[0,0.37]$. This shows a performance comparable to (AfrAm1), and is Pareto-dominated on African-American votes per dollar by (General).

\subsection{Lookalike Audiences in Greensboro} 
\label{app:lookalike}
The ``lookalike" campaigns (G3) in Greensboro NC didn't provide enough votes to analyze their specificity with surveys. Because Facebook reports detailed statistics for age and gender, we can evaluate the proportion of the budget, views and clicks that belongs to each demographic. We observe that while we inform the lookalike campaigns with very different (e.g. all male and all female) audiences, the spending and view proportions across gender (male) and age (young) are indistinguishable (see Table~\ref{table:gso_lookalike}). This can only be explained if the created audience includes sufficient people outside the targeted demographic. 

We summarize these observations below.

\begin{observation}
Campaigns created using the ``lookalike" feature on Facebook even when seeded with lists specific to age and gender demographics, result in exposure and actions that are indistinguishable across those demographics.
\end{observation}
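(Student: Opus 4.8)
The plan is to establish this observation \emph{empirically}, since it is a claim about how Facebook's lookalike mechanism behaves rather than a deductive statement. The logical core is a contrapositive: if the lookalike feature faithfully reproduced the demographic composition of its seed list, then a campaign seeded with an all-male voter list would deliver exposure and actions skewed toward males, and one seeded with an all-female list skewed toward females. Observing instead that the two are statistically \emph{indistinguishable} forces the conclusion that each synthesized audience must contain a substantial fraction of individuals outside the seeded demographic.

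First I would fix the experimental design to maximize the signal: construct the lookalike audiences (campaign set G3 of Appendix~\ref{app:details}) from seed lists that are maximally polarized along a single axis --- all-male versus all-female voter lists for gender, and young versus old voter lists for age --- so that faithful targeting would produce the largest possible divergence in delivered demographics. Second, because these lookalike campaigns did not generate enough completed surveys to estimate the specificity $\beta_{ij}$ directly (the survey pipeline used for the high-specificity claims of Section~\ref{sec:greensboro} is too sparse here), I would instead use Facebook's own reported aggregate statistics --- budget spent, impressions, and clicks --- broken down by the age and gender buckets that the platform exposes. These serve as much higher-sample-size proxies for exposure and action than the survey conversions. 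Third, for each polarized pair I would compute the demographic proportion of each reported quantity (the male fraction of spend, views, and clicks under the male-seeded versus the female-seeded campaign, and similarly for age), attach confidence intervals, and check whether the intervals for oppositely seeded campaigns overlap --- equivalently, whether a two-proportion test fails to reject equality. Indistinguishability is exactly the statistical statement that these proportions coincide within noise, as tabulated in Table~\ref{table:gso_lookalike}; the inference of the first paragraph then closes the argument.

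The hard part will be that this is a statistical rather than a deductive conclusion, so the substantive work is ruling out confounds, not computing the proportions. I must argue that the indistinguishability is attributable to the lookalike construction and not to (i) the two demographics simply behaving identically in engagement irrespective of targeting, or (ii) the click-optimizing bidding strategy equalizing delivery for reasons orthogonal to audience composition. The cleanest defense is the internal contrast already available in the paper: the voter-list custom campaigns of Section~\ref{sec:greensboro}, run under the same bidding strategy in the same city, \emph{do} achieve high demographic-specific delivery, so neither the population nor the bidding mechanism can by itself produce indistinguishability. This isolates the lookalike audience construction as the cause and rules out the two most plausible alternative explanations.
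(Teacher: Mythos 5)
Your proposal matches the paper's own argument: Appendix~\ref{app:lookalike} likewise notes that the G3 lookalike campaigns yielded too few survey conversions, falls back on Facebook's reported spend/impression/click breakdowns by age and gender, compares oppositely seeded campaigns (male vs.\ female, young vs.\ middle vs.\ old) via Table~\ref{table:gso_lookalike}, and draws the same contrapositive inference that the synthesized audiences must contain many people outside the seeded demographic. Your additions --- formal two-proportion tests with confidence intervals, and the explicit use of the high-specificity custom campaigns of Section~\ref{sec:greensboro} to rule out the ``identical engagement'' and ``bidding-mechanism'' confounds --- go slightly beyond what the paper spells out, but they refine rather than change the route.
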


\begin{table}[!htb]  
\centering
 \begin{tabular}{|| c || c | c || c | c | c ||} 
 \hline
 Proportion of: & \multicolumn{2}{c||}{Male} & \multicolumn{3}{c||}{Young} \\
 \hline
 Targeting Criterion: & Female & Male & Young & Middle & Old \\  
 \hline
 Custom Audience & 0.000 & 1.000 & 1.000 & 0.000 & 0.000 \\
 \hline
 Spend & 0.544 & 0.558 & 0.789 & 0.781 & 0.679 \\ 
 \hline
 Impressions & 0.555 & 0.528 & 0.822 & 0.835 & 0.770 \\
 \hline
 Clicks & 0.655 & 0.589 & 0.835 & 0.820 & 0.607 \\
 \hline
\end{tabular}
\caption{Specificity of complementary Lookalike Campaigns; the voter list demographic that is used for creating the respective lookalike campaign is listed in the second row. For each campaign, we find the fraction of spend, impressions, and clicks that Facebook allocates to male and young audiences.} 
\label{table:gso_lookalike}
\end{table} 

\begin{figure}[htbp]
\begin{center}
\parbox{.46\linewidth}{
\includegraphics[width = 0.45\textwidth]{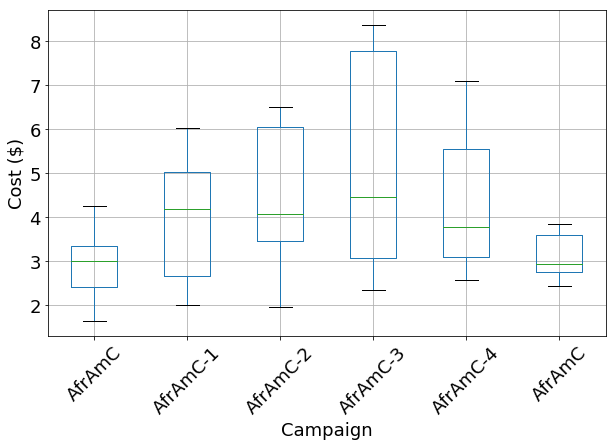}
\subcaption{\label{fig:gso_aa_daily_cpc} Box plots of the daily cost per click for Greensboro African American custom campaigns, where the first and last campaigns are Full campaigns and the rest are run with a fourth of the voter list each.}
}
\hspace{0.2in}
\parbox{.46\linewidth}{
\includegraphics[width = 0.45\textwidth]{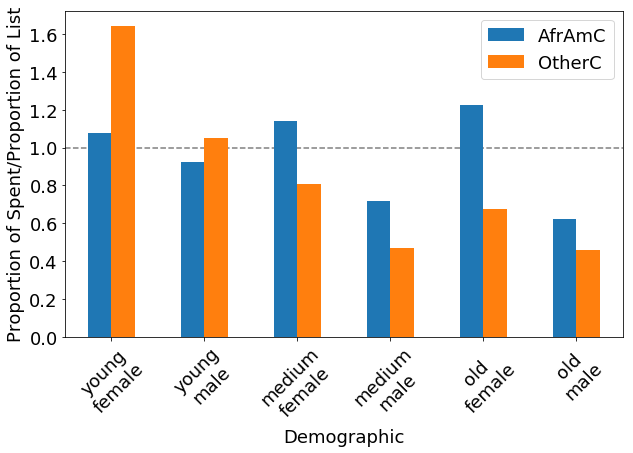}
\subcaption{\label{fig:gso_prop_spending_w_race} The ratio of the proportion of the budget allocated to the proportion of the voter list for that demographic, each with respect to the overall campaign.}
}
\end{center}
\caption{Effect of campaign size (left) and proportionality of budget allocation (right)} 
\end{figure}

\subsection{Large versus Small Audience Sizes}
\label{sec:k_campaigns}
We tested whether reducing audience size had an impact on the performance of advertising campaigns. Towards this end, we split the voter list for African Americans in Greensboro (the campaign with the highest allocated budget) into four equal parts. We first ran a larger campaign targeted to the full list of African American voters with a daily budget of around $\$100$. We then stopped this campaign and ran four smaller campaigns, each with one equal sized split of the voter list and a daily budget of $\$25$. The results of the experiment are shown in Fig. \ref{fig:gso_aa_daily_cpc}, and leads to the following observation. 

\begin{observation}
The smaller split campaigns (with smaller audiences) each has a greater range in cost per click, a higher maximum cost per click, and higher mean cost per click than the campaigns with the full voter list (with larger audience).
\end{observation}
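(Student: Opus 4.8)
The plan is to establish this observation empirically through a controlled experiment rather than a mathematical derivation, since the claim is about the behavior of Facebook's delivery algorithm as audience size shrinks while everything else is held fixed. First I would isolate audience size as the sole treatment variable: I would use the single demographic carrying the largest budget (African American voters in Greensboro), so that conversion and pricing characteristics within the audience are as homogeneous as possible, and I would partition its voter list uniformly at random into four equal parts, so that each sub-campaign draws from a statistically identical subpopulation. Budgets would be scaled proportionally — the full-list campaign at roughly \$100 per day and each split at \$25 per day — so that any observed difference in cost per click is attributable to audience size and the platform's delivery dynamics, and not to a difference in spending level or demographic mix.

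Next I would collect the daily cost-per-click (CPC) time series for the full-list campaign and for each of the four split campaigns, and summarize each campaign by exactly the three statistics named in the claim: the range (maximum minus minimum daily CPC), the maximum daily CPC, and the mean daily CPC. The observation then reduces to verifying three inequalities per split — that each split's range, maximum, and mean each exceed the corresponding quantity for the full-list campaigns. I would present this comparison via side-by-side box plots, as in Fig.~\ref{fig:gso_aa_daily_cpc}, which simultaneously expose the center, spread, and extremes of each distribution and make all three inequalities visually checkable at once; the whiskers and box heights directly encode the range and mean comparisons, and the topmost points encode the maximum comparison.

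The hard part will be disentangling the audience-size effect from two confounds. The first is temporal: because the full-list campaign is run first and then stopped before the four splits are launched, shifts in auction competition or overall advertising demand between the two periods could masquerade as an audience-size effect; I would mitigate this by running the four splits concurrently with one another and by checking that no exogenous demand spike coincides with the switchover. The second confound is the inherent high variance of delivery at small budgets — which is precisely the phenomenon under study, but which also makes each campaign's daily statistics noisy over a short window of days. Here the key argument is that the effect appears consistently across all four \emph{independent} random splits rather than in a single partition: four independent draws from the same subpopulation all exhibiting the same qualitative upward shift in range, maximum, and mean is strong evidence that the effect is structural rather than an artifact of one unlucky split, and it is this replication across splits that I would lean on to make the observation credible.
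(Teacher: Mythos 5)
Your experimental design matches the paper's almost exactly --- same demographic (the Greensboro African American voter list, chosen as the highest-budget campaign), same uniform random split into four equal parts, same proportional budgets (\$100/day for the full list vs.\ \$25/day per split), and the same summary of daily cost per click via side-by-side box plots as in Fig.~\ref{fig:gso_aa_daily_cpc}. However, there is a genuine gap in how you handle the sequential confound. You correctly flag that the full campaign runs first and is stopped before the splits launch, but the two mitigations you offer do not address the most natural alternative explanation: \emph{ad fatigue and exhaustion of the voter list}. Because the splits run after the full campaign has already advertised to this very audience, many users may have already clicked (and voted) or become habituated to the creative, which would raise CPC for every split simultaneously. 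Your replication argument --- four independent splits all showing the shift --- buys you nothing against this confound: all four splits inherit the same fatigued, partially exhausted audience, so a consistent upward shift across splits is exactly what fatigue would also produce. Likewise, monitoring for exogenous demand spikes does not touch fatigue, which is endogenous to the campaign itself; and running the splits concurrently with the full campaign would trade the timing confound for a new one, since overlapping audiences would compete against each other in the same auctions.

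The paper closes this hole with an A--B--A design: after the four split campaigns finish, it re-runs the \emph{full} campaign for another six days and observes that its performance is comparable in the initial and final time segments (these are the first and last campaigns in Fig.~\ref{fig:gso_aa_daily_cpc}). If fatigue or list exhaustion were driving the splits' elevated CPC, the second full run --- facing an even more depleted audience --- would show the same elevation; since it does not, the effect is cleanly attributable to audience size. You should add this re-baseline step (or an equivalent washout control) to your design; without it, the stated observation is confounded in precisely the way you identified but did not resolve.
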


In order to rule out the effects of ad fatigue and exhaustion of the voter list as the cause, we again ran the full campaign for another six days. Notice that the full campaign has comparable performance in both the initial and final time segments, showing that the effect we observe cannot be explained by ad fatigue.

\subsection{Relation of Spend to Proportion of Voter List}
\label{app:budget}
We now analyze how Facebook splits budget across sub-demographics within a voter list. As mentioned before, when an advertiser submits a list of people to Facebook for a campaign audience, Facebook uses the information in the list to identify corresponding Facebook profiles. The budget set for the campaign is then spent advertising directly to the list of matched Facebook profiles. 

We analyze the Greensboro (AfrAmC) and (OtherC) campaigns described above. For each campaign, we consider the sub-demographics induced by combinations of age and gender. For each of these sub-demographics, Facebook shows us the spend per campaign. We calculate the ratio of the fraction of the spend for that sub-demographic to the fraction of the voter list that is of that sub-demographic.  In Fig.~\ref{fig:gso_prop_spending_w_race}, we show the actual ratios lie between at least $0.4$ and $1.60$. This shows that to a reasonable approximation, for each demographic $i$,  we can assume $q_i$ is proportional to the size of the corresponding voter list. Of course, the resulting conversions across demographics within this list could be highly unequal (see Section~\ref{sec:greensboro}). Our goal is not to argue that Facebook's allocation mechanisms are perfectly proportional, but instead to come up with a reasonable approximation that we can use in our simulation and experiments. 
 
\subsection{Outcomes Orthogonal to Targeting Criteria}
\label{app:gender}
The custom campaigns in Greensboro did not explicitly target gender and a priori we expect the gender ratios to be proportional. We show in Table~\ref{table:gso_women} that the resulting votes have a disproportionate fraction of females, while they make up approximately half of each list. Therefore, if we impose equity of effort via custom campaigns along the dimensions of race and age, this has the side-effect of producing inequitable outcomes along the gender dimension.  The take-away is that if we seek parity along a dimension (such as gender), we need to explicitly consider it in our optimization models and algorithms. This motivates our overall problem of balancing fine grained demographics.

\begin{table}[htbp]
\centering
 \begin{tabular}{||c | c | c | c | c | c | c ||} 
 \hline
& YoungC & AfrAmC & MediumC & OldC & OtherC \\ 
 \hline
 Fraction of Votes & \textbf{0.49} & \textbf{0.74} & \textbf{0.75} & \textbf{0.83} & \textbf{0.86} \\ 
 \hline
 Fraction of List & 0.59 & 0.61 & 0.54 & 0.58 & 0.55 \\ 
 \hline
\end{tabular}
\caption{\label{table:gso_women}Fraction of female votes to overall votes for Greensboro 2019 Custom Audience Campaigns.} 
\end{table}

\section{Algorithms for Other Objective Functions}
\label{app:model}
\subsection{Algorithm for Monotone Functions} 
\label{sec:monotone}
We first consider the case of monotone functions, where the objective can be written as
$$ \mbox{Minimize} \sum_{i \in D} f\left( \frac{Y_{\sigma(i)}}{\beta_i} \right) $$
subject to Constraint (\ref{eq:atmostone}). Here, $f$ is an arbitrary monotonically non-decreasing function with $f(0) = 0$, and $f(\rho) = 1$ for $\rho \ge 1$. 

Let $OPT$ denote the optimal objective value, as well as the optimal solution. The first thing to note is that Claim~\ref{claim:ahead} need not hold: A demographic $i \in D$ can now potentially be assigned to campaign $j$ with $Y_j$ much smaller than $\beta_i$. This complicates the design of the dynamic program. We now show the following theorem:

\begin{theorem}
\label{thm:monotone}
A solution of objective value at least $OPT/2$ can be computed in time $O(k n^5)$.
\end{theorem}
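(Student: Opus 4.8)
The plan is to reduce the monotone-function problem to the $\gamma$-step function problem that we already solved exactly in Theorem~\ref{thm:main}, losing only a factor of $2$ in the objective. The essential difficulty, as noted, is that Claim~\ref{claim:ahead} fails: for a general monotone $f$ a demographic $i$ may be profitably assigned to a campaign $j$ with $Y_j$ substantially smaller than $\beta_i$, because it still earns partial credit $f(Y_j/\beta_i) \in (0,1)$. So the clean ``assign $i$ only to campaign $\theta_i$'' structure that drove the dynamic program is gone, and we cannot directly index the DP by thresholds.

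First I would discretize the contribution of each demographic. For a fixed demographic $i$, the contribution $f(Y_{\sigma(i)}/\beta_i)$ is a monotone function of $Y_{\sigma(i)}$ taking values in $[0,1]$. The key observation is that for any $\rho \in [0,1]$ we have the two-sided sandwich
\begin{equation}
\label{eq:sandwich}
\tfrac{1}{2}\left(\mathbf{1}_{\rho \ge 1/2} + \rho\cdot\mathbf{1}_{\rho < 1/2}\right) \le f(\rho) \le 1,
\end{equation}
but rather than work with arbitrary $f$ directly, I would exploit that $f$ is fixed and known: partition the possible contribution levels into the two bands $[0,1/2)$ and $[1/2,1]$. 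Concretely, define for each demographic $i$ a \emph{half-threshold} $\tilde{\beta_i}$ as the smallest value of $Y$ at which $f(Y/\beta_i) \ge 1/2$; this is well defined by monotonicity. Running the exact $\gamma$-step algorithm of Theorem~\ref{thm:main} with these modified thresholds $\tilde{\beta_i}$ (in place of $\hat{\beta_i} = \gamma\beta_i$) computes the maximum number of demographics that can be pushed to contribution at least $1/2$, subject to Constraint~(\ref{eq:atmostone}). Call the resulting solution $ALG$.

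The comparison to $OPT$ is the crux. In the optimal solution, split the demographics into those contributing at least $1/2$ and those contributing less; call the value from the first group $OPT_{\ge}$ and from the second $OPT_{<}$, so $OPT = OPT_{\ge} + OPT_{<}$. Since each demographic in the second group contributes strictly less than $1/2$, we have $OPT_{<} < (1/2)\cdot|D|$; more usefully, $OPT_{\ge} \ge (1/2)\cdot(\text{number of demographics in the first group})$, and that count is a feasible value for the half-threshold $\gamma$-step problem, hence at most the number $ALG$ achieves. The remaining work is to argue that $ALG$'s objective under the \emph{true} $f$ is at least $(1/2)\cdot ALG$-count (each satisfied demographic contributes at least $1/2$), and that this quantity dominates $OPT/2$. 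I would close the $OPT_{<}$ contribution by observing that any demographic the optimum pushes to contribution in $[1/2,1]$ gives at most $1$, so $OPT \le (\text{count in first group}) + (1/2)\cdot(\text{count in second group}) \le 2\cdot(\text{count in first group})$ when we retain only the first group, yielding the bound; the precise bookkeeping is routine once the two bands are fixed.

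The main obstacle I anticipate is not the approximation ratio itself but ensuring the reduction to Theorem~\ref{thm:main} is faithful: the half-thresholds $\tilde{\beta_i}$ must be substituted into exactly the same knapsack-based dynamic program, and I must verify that replacing $\hat{\beta_i}$ by $\tilde{\beta_i}$ preserves the structural claim (the analogue of Claim~\ref{claim:ahead} now holds for the \emph{binarized} objective, since it is again a step function). Once that is in place the running time is inherited verbatim as $O(k n^5)$, and the factor-$2$ loss comes entirely from collapsing the continuum of partial-credit values onto the single threshold $1/2$. I would therefore spend the bulk of the argument justifying~(\ref{eq:sandwich})-style band bounds and the feasibility-transfer between the two objectives, and treat the dynamic-programming invocation as a black box from the earlier theorem.
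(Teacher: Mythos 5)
There is a genuine gap, and it is in exactly the step you flagged as ``routine bookkeeping.'' Your chain of inequalities requires $OPT \le 2\cdot(\text{count of demographics with contribution} \ge 1/2)$, but nothing forces the below-half demographics to contribute at most half of $OPT$: there can be many of them, each contributing just under $1/2$, so that they carry essentially \emph{all} of the optimal value. Concretely, take the monotone function $f(\rho) = 0.49$ for $\rho \in (0,1)$ and $f(\rho)=1$ for $\rho \ge 1$, with $k=1$, demographics $\beta_i = 2^i$, $\alpha_i = 1/n$, $w_i = 1/(n 2^i)$. The optimum places all $n$ demographics in a single campaign with tiny $Y = \epsilon$, trivially satisfying Constraint~(\ref{eq:atmostone}) and earning $0.49\,n$. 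Your half-thresholds are $\tilde{\beta_i} = \beta_i$ (since $f(\rho) \ge 1/2$ only at $\rho \ge 1$), and the binarized $\gamma$-step problem can then satisfy only $O(\log n)$ demographics under the knapsack constraint, so your algorithm returns value $O(\log n)$ against $OPT = \Theta(n)$ --- an unbounded loss, not a factor $2$. (Your sandwich inequality is also false as stated: a monotone $f$ with $f(0)=0$, $f(1)=1$ can sit below $1/2$ on all of $[0,1)$.) The deeper issue is that collapsing the partial-credit continuum onto a single contribution band discards precisely the mass that makes the monotone case harder than the step case.

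The paper's proof splits $OPT$ not by contribution \emph{value} but by assignment \emph{geometry}, and this is what rescues the factor $2$. Each demographic $i$ in the optimum is either assigned to a campaign with $Y \ge \beta_i$ (full credit $1$) or to one with $Y \le \beta_i$ (partial credit); one of these two classes carries at least $OPT/2$. The first class is exactly the $\gamma$-step problem with $\gamma = 1$, solved by Theorem~\ref{thm:main} in $O(kn^5)$. The key structural observation for the second class --- which your reduction has no analogue of --- is that when every assignment satisfies $Y_{\sigma(i)} \le \beta_i$, the expected conversions obey $\sum_i w_i Y_{\sigma(i)} \le \sum_i w_i \beta_i = \sum_i \alpha_i = 1$, so Constraint~(\ref{eq:atmostone}) is \emph{automatically} satisfied and the problem reduces to an unconstrained placement of $k$ levels $Y_j$ among the sorted $\beta_i$, maximizing $\sum_i f(Y_{\sigma(i)}/\beta_i)$ exactly via a simple $O(kn^2)$ dynamic program. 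In the counterexample above, this second DP recovers the $0.49\,n$ that your binarization throws away. Taking the better of the two sub-solutions yields $OPT/2$ in time $O(kn^5)$. If you want to salvage a discretization-style argument, you would need levels adapted to $f$'s range \emph{and} a way to handle the constraint-free regime $Y \le \beta_i$ separately --- at which point you have reconstructed the paper's case split.
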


In order to prove the theorem, we first discuss the structure of $OPT$. Given any solution with $Y_1 \le Y_2 \le \cdots \le Y_k$, suppose $i \in D$ satisfies $Y_{j} \le \beta_i \le Y_{j+1}$. Then this $i$ is assigned in one of three possible ways:

\begin{enumerate}
    \item It is not assigned to any campaign, so that $\sigma(i) = 0$; or
    \item It is assigned to campaign $j' > j$, contributes $1$ to the objective, and has expected number of conversions $w_i Y_{j'}$. In this case, we can assume $j' = j+1$ since it reduces the expected number of conversions while preserving the objective; or
    \item It is assigned to campaign $j' \le j$, so that it contributes $f\left( \frac{Y_{j'}}{\beta_i}\right)$ to the objective, and has expected number of conversions $w_i Y_{j'} \le w_i \beta_i$. 
\end{enumerate}

Note now that at least half the objective is contributed by Cases (1) and (2), or by Cases (1) and (3). We therefore solve for the best solution where each demographic is constrained to be assigned as in Cases (1) and (2), and we then solve for the best solution where each demographic is constrained to be assigned as in Cases (1) and (3). We then take the better of the two solutions. This will yield a $2$-approximation to $OPT$, proving Theorem~\ref{thm:monotone}.

\paragraph{Optimal Solution for Case (2).} We first consider the case where each demographic $i$ is either assigned to the smallest $j$ such that $Y_j \ge \beta_i$ or to campaign $0$. However, this case is exactly the same as the formulation in Section~\ref{sec:step} when $\gamma = 1$. By Theorem~\ref{thm:main}, the optimal solution for this case can be computed in $O(k n^5)$ time.

\paragraph{Optimal Solution for Case (3).} In this case, note that since each demographic $i \in D$ is assigned to a campaign $\sigma(i) = j$ with $Y_j \le \beta$, the expected number of conversions is
$$ \sum_{i \in D} w_i Y_{\sigma(i)} \le \sum_{i \in D} w_i \beta_i = \sum_{i \in D} \alpha_i = 1$$

Therefore, Constraint~(\ref{eq:atmostone}) is satisfied {\em regardless} of how $i \in D$ is assigned. In order to maximize the objective, each $i \in D$ should therefore be assigned to the largest $Y_j$ such that $Y_j \le \beta_i$. 

We can find this solution by a simple dynamic program. Let $W_j(i)$ denote the best set of $j$ campaigns, where $Y_j = \beta_i$. The final goal is find $\max_i W_k(i)$. For $j > 1$, we have the recurrence:
$$ W_j(i) = \max_{1 \le \ell \le i} \left\{ W_{j-1}(\ell) + \sum_{q=\ell}^{i-1} f(\beta_{\ell}/\beta_q) \right\}$$ 
which corresponds to assigning all demographics $q$ with $\ell \le q < i$ to $Y_{j-1} = \beta_{\ell}$, and taking the best possibility. The running time of this dynamic program is $O(k n^2)$.

Taking the better of the solutions to the two dynamic programs yields value at least $OPT/2$ while preserving Constraint~(\ref{eq:atmostone}), completing the proof of Theorem~\ref{thm:monotone}.

\subsection{Minimizing Total Variation Distance}
\label{sec:tvd}
In this section, we treat demographic fractions $\{\alpha_i\}$ and the vote fractions $\{\rho \alpha\}$ as distributions and minimize the variation distance between them. Recall that $\sum_{i \in D} \alpha_i = 1$, and that $f(\rho_i) = - \alpha_i \| 1 - \rho_i\|$, so that Objective~(\ref{eq:obj}) becomes:
$$ \mbox{Minimize} \sum_{i \in D} \alpha_i \| 1 - \rho_i\|$$
As mentioned before, such a global approach may ignore small demographics if there are not too many of them, which has the advantage of potentially reducing the cost per vote. It also leads to a more efficient dynamic program compared to the ones presented in Section~\ref{sec:step} and Appendix~\ref{sec:monotone}.

Formally, let $Y_0 = 0$ and $S_0$ be the corresponding demographics not assigned to any campaign. Recall that the fraction of conversions for demographic $i$ can be written as $w_i Y_{\sigma(i)}$ if $i \in S_j$. Similarly, $\alpha_i = w_i \beta_i$. The variation distance objective function is therefore:
$$ \mbox{Minimize} \sum_{i \in D} w_i \left| Y_{\sigma(i)} - \beta_i \right|$$
subject to the constraint that the expected number of conversions is $1$, that is, Constraint~(\ref{eq:exactlyone}). In this setting, it is not clear how to replace Constraint~(\ref{eq:exactlyone}) by Constraint~(\ref{eq:atmostone}).
We now show the following theorem; the non-trivial part of the proof is satisfying Constraint~(\ref{eq:exactlyone}). 
\begin{theorem}
\label{thm:tvd}
A solution with TVD at most $2OPT$ and that uses at most $k+1$ campaigns can be computed in $O(kn)$ time.
\end{theorem}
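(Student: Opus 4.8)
The plan is to recast the variation-distance problem as a one-dimensional weighted clustering problem and to dispatch the equality constraint (\ref{eq:exactlyone}) by a careful choice of cluster centers. First I would sort the demographics so that $\beta_1 \le \beta_2 \le \cdots \le \beta_n$ and view each demographic as a point $\beta_i$ on the line carrying weight $w_i$; then the objective $\sum_i w_i \left| Y_{\sigma(i)} - \beta_i \right|$ is precisely the cost of assigning each weighted point to one of the $k$ centers $\{Y_j\}$ under the weighted $\ell_1$ metric. Invoking the standard fact that an optimal one-dimensional $\ell_1$ clustering has contiguous clusters, I would restrict attention to partitions of the sorted list into $k$ intervals, which is what makes a fast dynamic program possible.

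The cost upper bound is the easy half. If I temporarily drop constraint (\ref{eq:exactlyone}) and solve the pure weighted $k$-median problem, placing each center at the weighted median of its interval, the resulting optimum $A$ satisfies $A \le OPT$: the solution realizing $OPT$ is itself a feasible $k$-center assignment for the unconstrained problem, so its $\ell_1$-cost (which equals its objective value $OPT$) dominates the unconstrained optimum. Thus any clustering returned by the median dynamic program already has $\ell_1$-cost at most $OPT$; what it need not do is satisfy $\sum_j Y_j W_j = 1$, writing $W_j = \sum_{i \in S_j} w_i$.

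The non-trivial part, as flagged, is restoring the equality constraint exactly without inflating the cost. The key observation I would exploit is that replacing each center by the \emph{weighted mean} $\mu_j = \left(\sum_{i \in S_j} w_i \beta_i\right)/W_j$ makes the constraint hold identically, since $\sum_j \mu_j W_j = \sum_j \sum_{i \in S_j} w_i \beta_i = \sum_i w_i \beta_i = \sum_i \alpha_i = 1$, regardless of how the clustering was chosen. The price of switching from median to mean is only a factor of two, and this is the source of the $2\,OPT$ bound: for a single weighted set with median $m$ and mean $\mu$, the triangle inequality gives $\sum_i w_i \left| \beta_i - \mu \right| \le \sum_i w_i \left| \beta_i - m \right| + W \left| m - \mu \right|$, and because $\sum_i w_i (\beta_i - \mu) = 0$ one has $W \left| m - \mu \right| = \left| \sum_i w_i (\beta_i - m) \right| \le \sum_i w_i \left| \beta_i - m \right|$, so the mean-cost is at most twice the median-cost. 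Summing over clusters yields total cost at most $2A \le 2\,OPT$ while meeting (\ref{eq:exactlyone}) on the nose. (An equivalent route keeps the medians and rescales every $Y_j$ by $1/T$ where $T = \sum_j m_j W_j$; since $\left| T - 1 \right| \le A$, the same bookkeeping gives cost at most $A + \left| 1 - T \right| \le 2\,OPT$.)

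Finally, I would assemble the interval dynamic program, precomputing interval means and interval $\ell_1$-costs by prefix sums so that each transition is $O(1)$, and exploiting the quadrangle (Monge) structure of the $\ell_1$-cost (via SMAWK) to bring the running time to $O(kn)$; the single slack campaign permitted by ``at most $k+1$'' is exactly the room needed to pin the total to $1$ within this time budget, although the mean-center construction above in fact already uses only $k$ campaigns. The step I expect to be the main obstacle is precisely the equality constraint: a naive fix-up that re-solves after perturbing totals could cascade changes across clusters, so the real content lies in recognizing that the mean center (or the uniform rescaling) corrects the total \emph{globally and for free} up to the factor of two, rather than disturbing the clustering at all.
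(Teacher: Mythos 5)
Most of your machinery coincides with the paper's proof: the balance condition in Eq.~(\ref{eq:modify}) defines precisely the weighted mean of each cluster, so your mean-replacement step, the identity $\sum_j \mu_j W_j = \sum_{i} w_i \beta_i = \sum_i \alpha_i = 1$, and the factor-$2$ comparison of mean-cost to median-cost (which you prove via the triangle inequality and the paper proves directly from the balance property) are exactly its key lemma, and $O(kn)$ is fine via either the cited exact line-$k$-median algorithm or your Monge/SMAWK dynamic program.

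However, there is a genuine gap: you dropped the dummy campaign $S_0$ with $Y_0 = 0$. The model allows demographics to be left unassigned, and an unassigned demographic still pays $w_i\left|0-\beta_i\right| = \alpha_i$ in the TVD objective, so the correct relaxation is weighted $(k+1)$-median on the line with one center \emph{pinned at $0$} --- which is how the paper sets it up --- not $k$-median with $k$ free centers. With $k$ free centers your inequality $A \le OPT$ fails: the constrained optimum may use $k$ genuine campaigns \emph{and} an unassigned cluster of small-$\beta$ demographics, i.e.\ $k+1$ effective centers, and an instance with $k+1$ tight, well-separated clusters (one near $0$) makes the ratio $A/OPT$ unbounded. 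The same omission falsifies your closing claim that the mean construction ``already uses only $k$ campaigns'' and that the $(k+1)$-st is mere slack: once the pinned center is restored, the mean replacement moves $\tilde{Y_0}$ off $0$, turning the formerly unassigned demographics into a genuine $(k+1)$-st campaign --- that is exactly why Theorem~\ref{thm:tvd} is stated with $k+1$. Interestingly, your parenthetical rescaling variant does survive the fix: with the pinned center included, $Y_0 = 0$ is invariant under scaling by $1/T$, the bound $\left|1-T\right| \le A$ still holds when the $S_0$ terms are included, and one gets cost at most $A + \left|1-T\right| \le 2\,OPT$ while meeting Constraint~(\ref{eq:exactlyone}) with only $k$ campaigns --- a slight strengthening of the theorem --- but only after the pinned-zero-center relaxation is in place, which your write-up never introduces.
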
 

\paragraph{Weighted $k$-Median Relaxation.} Suppose we consider the relaxed problem where we ignore Constraint~(\ref{eq:exactlyone}). Clearly, the objective value does not increase when we remove a constraint. This relaxation is exactly the weighted $k$-median problem on a line, where the point $i$ is located at $\beta_i$, its weight is $w_i$, and the centers are located at $\{Y_j\}$, with an extra $(k+1)^{st}$ center fixed at $0$. This has a $O(k n)$ time exact algorithm~\cite{hassin1991improved}. The resulting solution yields the $\{Y_j\}$ as well as the associated sets $S_j$. Given this optimal solution $0 = Y_0 \le Y_1 \le \cdots \le Y_k$, each $i \in D$ with $Y_j \le \beta_i \le Y_{j+1}$ is assigned to the closest $Y$, which means it is assigned either to $Y_j$ or to $Y_{j+1}$.

\medskip
Given a solution to this relaxation (which has objective at most $OPT$), we show that Constraint~(\ref{eq:exactlyone}) can be satisfied by increasing the objective by a factor of $2$. Suppose the relaxed optimum has $0 = Y_0 \le Y_1 \le Y_2 \le \cdots \le Y_k$. Let $S_j$ denote the set of demographics assigned to campaign $j$; note that each $i \in S_j$ is closer to $Y_j$ than to any other $Y_{j'}$. For $j \ge 0$, we modify $Y_j$ as follows: Let 

\begin{equation}
    \label{eq:modify}
    \tilde{Y_j} = \left\{ y \ \middle|\ \sum_{i \in S_j, \beta_i \le y} w_i(y - \beta_i) = \sum_{i \in S_j, \beta_i \ge y} w_i(\beta_i - y) \right \}
\end{equation}

The key lemma below shows that the modification satisfies Constraint~(\ref{eq:exactlyone}) and increases the objective by a factor of $2$. Since the above process may make $\tilde{Y_0} > 0$, this results in an extra campaign in the solution, so it will have $k+1$ campaigns. This will complete the proof of Theorem~\ref{thm:tvd}.

\begin{lemma} The modified solution $\tilde{Y_1} \le \cdots \le \tilde{Y_k}$ with sets $S_j$ for campaign $j$ satisfies:
\begin{enumerate}
    \item The TVD satisfies: $\sum_{j=0}^k \sum_{i \in S_j} w_i \left\| \tilde{Y_j} - \beta_i\right\| \le 2 OPT$; and 
    \item The expected number of conversions satisfies: $\sum_{j=0}^k \sum_{i \in S_j} w_i \tilde{Y_j} = 1$.
\end{enumerate}
\end{lemma}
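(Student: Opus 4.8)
\emph{Plan.} The plan is to first recognize what the modification in Eq.~(\ref{eq:modify}) actually computes, since that single fact drives both parts of the lemma. Writing $g(y) = \sum_{i \in S_j} w_i (y - \beta_i)$ and splitting this sum at $y$, the defining condition for $\tilde{Y_j}$ is exactly $g(\tilde{Y_j}) = 0$ (the terms with $\beta_i > y$ move to the right-hand side with a sign change, and the boundary terms $\beta_i = y$ contribute $0$). Hence
$$\tilde{Y_j} = \frac{\sum_{i \in S_j} w_i \beta_i}{\sum_{i \in S_j} w_i}$$
is the \emph{weighted mean} (centroid) of $S_j$, not its weighted $1$-median. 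Seeing this — rather than verifying it — is the real content, because the centroid is precisely the choice that both restores the unit-conversion budget and stays within a factor $2$ of the median in $\ell_1$ cost.

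For Part~2, which the text flags as the delicate point, the centroid property makes the verification immediate. Let $W_j = \sum_{i \in S_j} w_i$. Then $\sum_{i \in S_j} w_i \tilde{Y_j} = W_j \tilde{Y_j} = \sum_{i \in S_j} w_i \beta_i$, so summing over all campaigns $j = 0, \ldots, k$ gives
$$\sum_{j=0}^k \sum_{i \in S_j} w_i \tilde{Y_j} = \sum_{j=0}^k \sum_{i \in S_j} w_i \beta_i = \sum_{i \in D} w_i \beta_i = \sum_{i \in D} \alpha_i = 1,$$
using $\alpha_i = w_i \beta_i$ and $\sum_i \alpha_i = 1$. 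Thus Constraint~(\ref{eq:exactlyone}) holds exactly. Note the contribution of $S_0$ is no longer zero once $\tilde{Y_0} > 0$, which is exactly why the modified solution spreads onto $k+1$ campaigns.

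For Part~1, I would argue cluster by cluster via the triangle inequality together with the centroid property, so that the bound never needs $Y_j$ to be optimal for its own cluster. For each $i \in S_j$, $|\tilde{Y_j} - \beta_i| \le |\tilde{Y_j} - Y_j| + |Y_j - \beta_i|$. Summing against $w_i$ and using that $\tilde{Y_j}$ is the centroid (so $\sum_{i \in S_j} w_i(\beta_i - \tilde{Y_j}) = 0$), I obtain
$$W_j\,|\tilde{Y_j} - Y_j| = \left| \sum_{i \in S_j} w_i(\beta_i - Y_j)\right| \le \sum_{i \in S_j} w_i |\beta_i - Y_j|,$$
so that $\sum_{i \in S_j} w_i |\tilde{Y_j} - \beta_i| \le 2 \sum_{i \in S_j} w_i |Y_j - \beta_i|$. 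Summing over $j$ bounds the modified TVD by twice the objective of the relaxed weighted $k$-median solution; since dropping Constraint~(\ref{eq:exactlyone}) can only lower the optimum, that relaxed objective is at most $OPT$, giving $\mathrm{TVD} \le 2\,OPT$ and, with Part~2, completing the lemma and hence Theorem~\ref{thm:tvd}.

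The main obstacle is conceptual rather than computational: the natural per-cluster optimizer for the $\ell_1$ objective is the weighted $1$-median, but the median does \emph{not} restore the unit-conversion constraint, whereas the centroid does. The crux is seeing that replacing each median by the centroid (i) exactly fixes the budget by linearity of $w_i\beta_i = \alpha_i$ and (ii) costs only a factor $2$, and that the factor-$2$ step is robust in that it compares the centroid against the \emph{given} $Y_j$ through the triangle inequality and requires no optimality of $Y_j$ within $S_j$.
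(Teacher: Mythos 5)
Your proposal is correct and takes essentially the same approach as the paper: you recognize that Eq.~(\ref{eq:modify}) defines $\tilde{Y_j}$ as the weighted centroid of its cluster, which restores Constraint~(\ref{eq:exactlyone}) exactly by linearity of $w_i \beta_i = \alpha_i$, and you obtain the factor $2$ per cluster against the relaxed weighted $k$-median solution, whose objective is at most $OPT$. The only cosmetic difference is in the factor-$2$ step, where you use the triangle inequality together with the displacement bound $W_j \left| \tilde{Y_j} - Y_j \right| \le \sum_{i \in S_j} w_i \left| \beta_i - Y_j \right|$, while the paper uses the balance condition to write the cluster cost as twice the one-sided sum $2\sum_{i \in A_j} w_i (\tilde{Y_j} - \beta_i)$ and then shifts $\tilde{Y_j}$ to $Y_j$ --- both yield the identical per-cluster inequality.
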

\begin{proof}
Given the $\{Y_j\}$ and sets $S_j$, the contribution of campaign $j$ to the original relaxed objective is $\sum_{i \in S_j} |Y_j - \beta_i|$, and to the modified solution is $\sum_{i \in S_j} |\tilde{Y_j} - \beta_i|$. 

Suppose $\tilde{Y_j} \le Y_j$; the other case is symmetric. Let $A_j = \{i \in S_j, \beta_i \le \tilde{Y_j}\}$. Then, 
$$ \sum_{i \in S_j} |\tilde{Y_j} - \beta_i|  =  2 \sum_{i \in A_j} \left(\tilde{Y_j} - \beta_i\right)   \le 2 \sum_{i \in A_j} \left(Y_j - \beta_i\right)   \le 2 \sum_{i \in S_j} \left|Y_j - \beta_i\right| $$
Here, the first equality follows from Eq~(\ref{eq:modify}); the first inequality follows since $\tilde{Y_j} \le Y_j$, and the final inequality follows since $A_j \subset S_j$. Summing over $j = 0,1,\ldots,k$ proves the first part of the lemma.

Note now that $\sum_{i \in D} w_i \beta_i = \sum_{i \in D} \alpha_i = 1.$ For each $j \ge 0$, Equation~(\ref{eq:modify}) implies that
$ \sum_{i \in S_j} w_i \tilde{Y_j} = \sum_{i \in S_j} w_i \beta_i$.
Summing over all $j$ proves the second part of the lemma.
\end{proof}

\end{document}